\documentclass[sigconf]{}

\AtBeginDocument{%
  \providecommand\BibTeX{{%
    \normalfont B\kern-0.5em{\scshape i\kern-0.25em b}\kern-0.8em\TeX}}}


\copyrightyear{2020}
\acmYear{2020}
\setcopyright{acmcopyright}\acmConference[KDD '20]{Proceedings of the 26th ACM SIGKDD Conference on Knowledge Discovery and Data Mining}{August 23--27, 2020}{Virtual Event, CA, USA}
\acmBooktitle{Proceedings of the 26th ACM SIGKDD Conference on Knowledge Discovery and Data Mining (KDD '20), August 23--27, 2020, Virtual Event, CA, USA}
\acmPrice{15.00}
\acmDOI{10.1145/3394486.3403073}
\acmISBN{978-1-4503-7998-4/20/08}


\usepackage{amsmath,amssymb,xspace,verbatim,paralist,enumitem}
\usepackage{algorithm,algorithmicx}
\usepackage[noend]{algpseudocode}
\usepackage{graphicx}
\usepackage{mathdots,mathtools,mathrsfs}

\usepackage{amsthm}
\usepackage{thmtools}
\usepackage{thm-restate}

\usepackage[normalem]{ulem} 



\usepackage[nameinlink]{cleveref}

\usepackage{stfloats}

\usepackage{subcaption}
\setlength{\belowcaptionskip}{-10pt}


\newtheorem{theorem}{Theorem}[section]
\newtheorem{lemma}[theorem]{Lemma}

\theoremstyle{definition}  \newtheorem{definition}[theorem]{Definition}

 \topmargin-2cm

\newbox\mybox 
\newdimen\myboxwidth    

\newcommand\addpicture[3]{%
\setbox\mybox=\hbox{\includegraphics[scale=#3]{#2}}
\myboxwidth\wd\mybox    
\renewcommand\windowpagestuff{%
\includegraphics[scale=#3]{#2}
\captionof{figure}{A test figure.}}
\parpic[#1]{%
\begin{minipage}{\myboxwidth}
 \windowpagestuff 
\end{minipage} 
} }

\newcommand{\ignore}[1]{}


\newcommand{\mypar}[1]{\medskip\noindent{\sffamily\bfseries #1.}~}
\newcommand{\etal}{{et al.}\xspace}


\newcommand{\eps}{\varepsilon}

\newcommand{\EX}{\hbox{\bf E}}
\newcommand{\Var}{\hbox{\bf Var}}



\newcommand{\EE}{\mathbb{E}}

\newcommand{\ALG}{TETRIS\xspace}
\newcommand{\EdgeALG}{\textsc{EdgeCountEstimator}\xspace}
\newcommand{\seed}{s}
\newcommand{\wL}{r}
\newcommand{\sL}{\ell}
\newcommand{\nt}{t}			
\newcommand{\NT}{T}			
\DeclareMathOperator{\mix}{\ell_{mix}}
\DeclareMathOperator{\emax}{max}

\renewcommand{\ge}{\geqslant}
\renewcommand{\le}{\leqslant}
\renewcommand{\geq}{\geqslant}
\renewcommand{\leq}{\leqslant}

\newcommand{\tO}{\widetilde{O}}



\begin{document}


\title{How to Count Triangles, without Seeing the Whole Graph}

\author{Suman K. Bera}
\email{sbera@ucsc.edu}
\affiliation{%
  \institution{UC Santa Cruz}
  \city{Santa Cruz}
  \state{California}
  \country{USA}
  \postcode{95064}
}
\author{C. Seshadhri}
\email{sesh@ucsc.edu}
\affiliation{%
  \institution{UC Santa Cruz}
  \city{Santa Cruz}
  \state{California}
  \country{USA}
  \postcode{95064}
}



\begin{abstract}
Triangle counting is a fundamental problem in the analysis of large
graphs. There is a rich body of work on this problem, in varying 
streaming and distributed models, yet all these algorithms require
reading the whole input graph. In many scenarios,
we do not have access to the whole graph, and can only 
sample a small portion of the graph (typically through crawling).
In such a setting, how can we accurately estimate the triangle count
of the graph?

We formally study triangle counting in the {\em random walk} access model
introduced by Dasgupta et al (WWW '14) and Chierichetti et al (WWW '16).
We have access to an arbitrary seed vertex of the graph, and can only perform
random walks. This model is restrictive in access and captures the challenges
of collecting real-world graphs. Even sampling a uniform random vertex is a hard task in this model.

Despite these challenges, we design a provable and practical algorithm, TETRIS,
for triangle counting in this model. TETRIS is the first provably sublinear algorithm
(for most natural parameter settings) that approximates the triangle count
in the random walk model, for graphs with low mixing time. 
Our result builds on recent advances in the theory
of sublinear algorithms. The final sample built by TETRIS is a careful mix
of random walks and degree-biased sampling of neighborhoods. 
Empirically, TETRIS accurately counts triangles on 
a variety of large graphs, getting estimates within 5\% relative error by looking
at 3\% of the number of edges. 



\end{abstract}

\begin{CCSXML}
<ccs2012>
<concept>
<concept_id>10002950.10003624.10003633.10010917</concept_id>
<concept_desc>Mathematics of computing~Graph algorithms</concept_desc>
<concept_significance>500</concept_significance>
</concept>
<concept>
<concept_id>10002950.10003648.10003671</concept_id>
<concept_desc>Mathematics of computing~Probabilistic algorithms</concept_desc>
<concept_significance>100</concept_significance>
</concept>
<concept>
<concept_id>10003752.10003809.10010055.10010057</concept_id>
<concept_desc>Theory of computation~Sketching and sampling</concept_desc>
<concept_significance>100</concept_significance>
</concept>
<concept>
<concept_id>10003752.10010061.10010065</concept_id>
<concept_desc>Theory of computation~Random walks and Markov chains</concept_desc>
<concept_significance>100</concept_significance>
</concept>
</ccs2012>
\end{CCSXML}

\ccsdesc[500]{Mathematics of computing~Graph algorithms}
\ccsdesc[100]{Mathematics of computing~Probabilistic algorithms}
\ccsdesc[100]{Theory of computation~Sketching and sampling}
\ccsdesc[100]{Theory of computation~Random walks and Markov chains}
\keywords{Triangle counting, Graph sampling, Random walks}


\maketitle
\section{Introduction}
\label{sec:intro}

Triangle counting is a fundamental problem in the domain of  
network science. The triangle count (and variants thereof)
appears in many classic parameters in social network analysis
such as the {\em clustering coefficient}~\cite{Ne03}, 
{\em transitivity ratio}~\cite{wasserman1994}, {\em local
clustering coefficients}~\cite{watts1998collective}.
Some example applications of this problem 
are motifs discovery in 
complex biological networks~\cite{milo2002network}, 
modeling large graphs~\cite{SeKoPi11,PfFo+12,PfMo+14},
indexing graph databases~\cite{khan2011neighborhood}, 
and spam and fraud detection cyber security~\cite{BecchettiBCG08}.
Refer to the tutorial~\cite{SeTi19} for more applications.

Given full access to the input graph $G$, we can exactly
count the number of triangles in $O(m^{3/2})$
time~\cite{itai1978finding}, where $m$ is the number of
edges in $G$. Exploiting degree based ordering,
the runtime can be improved to  $O(m\alpha)$~\cite{Chiba1985},
where $\alpha$ is the maximum core number (degeneracy) of $G$. 
In various streaming and distributed models, 
the triangle counting problem has a rich theory~\cite{BarYossefKS02, Jowhari2005, Buriol2006,cohen2009graph,Suri2011,kolda2014counting, McGregor2016, bera2017towards,BeraDegeneracy},
and widely used practical algorithm~\cite{BecchettiBCG08,jha2013space,PavanTTW13,tangwongsan2013parallel,ahmed2014graph,chen2016general,stefani2017triest,turkoglu2017edge,Turk2019}. 

Yet the majority of these algorithms, at some point, read the entire graph. (The only exceptions
are the MCMC based algorithms~\cite{rahman2014sampling,chen2016general}, but they require global
parameters for appropriate normalization. We explain in detail later.) In many practical scenarios, the entire graph is not known. 
Even basic graph parameters such as the total number of vertices and the number of edges are unknown.
Typically, a sample of the graph is obtained by crawling the graph,
from which properties of the true graph must be inferred. In common network analysis settings, practitioners crawl some portion of the (say)
coauthor, web, Facebook, Twitter, etc. network. They perform experiments on this sample, in the hope of inferring
the ``true" properties of the original graph.
This sampled
graph may be an order of magnitude smaller than the true graph.



There is a rich literature on graph sampling, but arguably,
Dasgupta~\etal~\cite{DaKu14} gave the first formalization of 
this sampling through 
the {\em random walk access} model. In this model, we have
access to some arbitrary seed vertex. We can discover portions of the
graph by performing random walks/crawls starting at this vertex. At
any vertex, we can retrieve two basic pieces of information --- its degree
and a uniform random neighbor. Rudimentary graph mining tasks such as
sampling a uniform random node is non-trivial in this model. 
Dasgupta~\etal~\cite{DaKu14} showed how to find the average degree
of the graph in this model. Chierichetti~\etal~\cite{ChDa+16} gave an elegant algorithm for sampling a uniform random vertex. Can we estimate 
more involved graph properties {\em efficiently} in this model?
This leads to the main research question behind this work.

{\em How can we accurately estimate the triangle count of the graph
by observing only a tiny fraction of it in the random walk access
model?}

There is a rich literature of sampling-based
triangle counting algorithms~\cite{schank2005finding,Buriol2006,tsourakakis2009doulion,jha2013space,PavanTTW13,seshadhri2014wedge,ahmed2014graph,turkoglu2017edge,Turk2019}. However,
all of these algorithms heavily relies on
uniform random edge samples or uniform random vertex samples.
Such samples are computationally expensive to
generate in the random walk access model. 
To make matters worse, we do not know the number 
of vertices or edges in the graph. Most sampling algorithms require 
these quantities to compute their final estimate.

\subsection{Problem description}
In this paper, we study the triangle estimation problem.
Given limited access to an input graph $G=(V,E)$, our goal is to design an $(\eps,\delta)$-estimator
for the triangle count.
\begin{definition}
\label{def:tri_cnt}
Let $\eps,\delta \in [0,1]$ be two parameters and $\NT$ denote the triangle counting.
A randomized algorithm is an $(\eps,\delta)$-estimator for the triangle counting problem if:
the algorithm outputs estimate $\overline{\NT}$
such that with probability (over the randomness of the algorithm) at least $1-\delta$, $(1-\eps) \NT \leq \overline{\NT} \leq (1+\eps) \NT$.
(We stress that there is no stochastic assumption on the input itself.)
\end{definition}
{\bf The {\em random walk access} model:} 
In designing an algorithm, we aim to minimize access to 
the input graph. To formalize, we required a query model and follow
models given in  Dasgupta~\etal~\cite{DaKu14} and Chierichetti~\etal~\cite{ChDa+16}.

The algorithm initially has access to a single, \emph{arbitrary} seed vertex $s$.
The algorithm can make three types of queries:
\begin{asparaitem}
    \item {\em Random Neighbor Query}: Given a vertex $v\in V$, acquire
    a uniform random neighbor of $v$.
    \item {\em Degree Query}: Given a vertex $v\in V$, acquire the degree of $v$.
    \item {\em Edge Query}: Given two vertices $u,v\in V$, acquire whether 
    the edge $\{u,v\}$ is present in the graph.
\end{asparaitem}
Starting from $s$, the algorithm makes queries to discover
more vertices, makes queries from these newly discovered vertices to
see more of the graph, so on and so forth.
An algorithm in this model does not have free access to the set of vertices.
It can only make {\em degree queries} and {\em edge queries}
on the vertices that have been observed during the random walk.
We emphasize that the random walk does not start from an uniform random 
vertex; the seed vertex is arbitrary. In fact, generating a uniform vertex in this 
model is a non-trivial task and explicitly studied by Chierichetti~\etal~\cite{ChDa+16}
We note a technical difference between the above model and those studied in~\cite{DaKu14, ChDa+16}.
Some of these results assume all neighbors can be obtained with a single query,
while we assume a query gives a random neighbor. As a first step towards sublinear
algorithms, we feel that the latter is more appropriate
to understand how much of graph needs to be sampled to estimate the triangle counting.
But it would be equally interesting to study different query models.
    

We do not assume prior 
knowledge on the number of vertices or edges in the graph.
This is consistent with many practical scenarios, such as API based online
networks, where estimating $|E|$ itself is a challenging task.
The model as defined is extremely restrictive, making
it challenging to design 
provably accurate algorithms in this model.

{\bf On uniform vertex/edge samples:} There
is a large body of theoretical and practical work efficient 
on efficient
triangle counting assuming uniform random vertex or edge samples (the latter is often
simulated by streaming algorithms)~\cite{BarYossefKS02,Jowhari2005,schank2005finding,Buriol2006,tsourakakis2009doulion, kolountzakis2012efficient,wu2016counting,McGregor2016, bera2017towards,BeraDegeneracy}.
However, when uniform random vertex or edge samples are not
available, as is the case in our model,
the existing literature is surprisingly quiet.

{\bf Complexity measure:} While designing an $(\eps,\delta)$-estimator
in the random walk model, our goal is to minimize the number of
\emph{queries} made. We do not deal with running time (though our final algorithm
is also time-efficient). For empirical results, we express queries as a fraction
of $m$, the total number of edges. In mathematical statements,
we sometimes use the $\tO$ notation to hide
dependencies on $\eps,\delta,\log n$.
%
\begin{figure}[!ht]
  \centering
  \includegraphics[width=0.7\columnwidth]{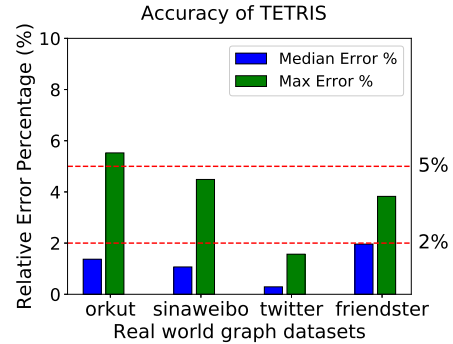}
  \caption{Accuracy of \ALG on real datasets: soc-orkut (3M vertices, 213M edges), soc-siaweibo (59M vertices, 523M edges), soc-twitter(42M vertices, 2.4B edges), soc-friendster(66M vertices, 3.6B edges). We run \ALG with exact same set of parameters for all the datasets and observes only 3\% of the edges. We repeat the experiments for 100 times. Remarkably for 100 independent runs, \ALG achieves worst case relative error of about 5\% (the green bar in the plot) and median relative error of 2\% (the blue bar in the plot).}
  \label{fig:intro-fig}
\end{figure}

\subsection{Our contributions}
In this work, we present a novel algorithm, Triangle Estimation Through Random
Incidence Sampling, or \ALG, that solves the triangle counting problem 
efficiently in the random walk access model. \ALG provably outputs
an $(\eps,\delta)$-estimator for the triangle counting problem. Under common assumptions
on the input graph $G$, \ALG is provably sublinear.
In practice, \ALG is highly accurate and robust.
Applying to real datasets, we demonstrate that,
it only needs to observe a tiny fraction of the graph
to make an accurate estimation of the triangle count (see
~\Cref{fig:intro-fig}).

\mypar {First provably sublinear triangle counting algorithm in the {\em random walk access} model} 
Our algorithm \ALG builds on recent theoretical results in 
sublinear algorithms for clique counting~\cite{ELRS15,eden2020faster}. 
Our central idea is to sample edges proportional to a somewhat non-intuitive quantity:
the degree of the lower degree endpoint of the edge.
In general, such weighted sampling is impossible to achieve in the {\em random walk access} model. 
However, borrowing techniques from Eden~\etal~{\cite{eden2020faster}},
we approximately simulate such a sample in this
model. By a careful analysis of various components, we prove that
\ALG  is an $(\eps,\delta)$-estimator.

%
\mypar{Accurate and robust empirical behavior} We run \ALG on 
a collection of massive datasets, each with more than 100 million edges. For all instances, \ALG 
achieves less than 3\% median relative error while observing less than
3\% of the graph. Results are shown in~\Cref{fig:intro-fig}. 
Even over a hundred runs (on each dataset), \ALG has a maximum error
of at most 5\%. We also empirically demonstrate the robustness of our algorithm 
against various choices for the seed vertex.  
   
\mypar{Comparison with existing sampling based methods} While the vast majority
of triangle counting algorithm read the whole graph, the MCMC algorithms
of Rahman~\etal~\cite{rahman2014sampling} and Chen~\etal~\cite{chen2016general}
can be directly adapted to the random walk access model. We also note
that the sparsification algorithm of Tsourakakis~\etal~\cite{tsourakakis2009doulion}
and the sampling method of Wu~\etal~\cite{wu2016counting} can be implemented
(with some changes) in the random walk model. 
We perform head to head comparisons of \ALG
with these algorithms. We observe that \ALG is the only algorithm that consistently
has a low error in all instances. The Subgraph Random walk of Chen~\etal~\cite{chen2016general}
has a reasonable performance across all instances, though its error is typically double
of \ALG. All other algorithms have an extremely high error on some input.
These findings are consistent with our theoretical analysis of \ALG, which proves
sublinear query complexity (under certain conditions on graph parameters).

%
%
\subsection{Main Theorem}
Our main theoretical contribution is to prove that
Triangle Estimation Through Random
Incidence Sampling, or \ALG is an $(\eps,\delta)$-estimator for the triangle
counting problem. 
Let $G$ be the input graph with $m$ edges and maximum
core number (degeneracy) $\alpha$.
\footnote{Degeneracy of a graph $G$ is the smallest integer 
$k$, such that every subgraph of $G$ has a vertex of degree
at most $k$. It is often called the maximum core number.
Informally, 
degeneracy is a measure of the sparsity of a graph.}
We use $\nt_e$ to denote the number of triangles incident on an
edge $e$. Define the maximum value of $\nt_e$ as $\nt_{\emax} := \max_{e\in E} \nt_e $. Assume the total number of triangles is $\NT$.
We denote the mixing time of the input graph as $\mix$.

\begin{theorem}
\label{thm:tri_count}
Let $\eps,\delta \in[0,1]$ be some parameters and $G$ be an arbitrary input graph. Then \ALG produces
an $(\eps,\delta)$-estimate for the triangle count in the random walk access model.
Moreover, the number of
queries made by \ALG is at most
\begin{align*}
O\left(\log \left(\frac{1}{\delta}\right) \frac{\log n}{\eps^2}
\left( \frac{m \mix \nt_{\emax}}{\NT} + \frac{m \alpha} {\NT} + \sqrt{m} \mix\right) \right) \,.
\end{align*}
The runtime and the space requirement of \ALG is bounded by the
same quantity as well.
\end{theorem}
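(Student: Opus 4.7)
The plan is to adapt the degeneracy-biased sublinear triangle-counting scheme of Eden, Levi, Ron and Seshadhri to the random walk access model. The argument splits into three nearly independent pieces: design of an ideal estimator together with its second-moment analysis, simulation of the required weighted samples by random walks, and a separate estimate of $m$.

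\emph{Step 1: An ideal estimator and its variance.} First I would assume that one can sample an edge $e=\{u,v\}$ with probability proportional to $\min(d_u,d_v)$. Orient $e$ so that $d_u\leq d_v$, draw a uniform random neighbor $w$ of $u$, and return an appropriately normalized indicator of the event $\{v,w\}\in E$. Assigning each triangle $\{u,v,w\}$ to the unique edge formed by its two lowest-degree vertices (breaking ties by vertex id) makes the estimator unbiased for $\NT$ up to a known constant, and reduces its second moment to a sum of the form $\sum_e \min(d_u,d_v)\cdot \nt_e$. The Chiba--Nishizeki inequality $\sum_e \min(d_u,d_v)=O(m\alpha)$, together with the pointwise bound $\nt_e\le \nt_{\emax}$, then yields a coefficient-of-variation bound whose two distinct contributions become the $m\alpha/\NT$ and $m\,\nt_{\emax}/\NT$ pieces of the final query bound.

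\emph{Step 2: Random-walk simulation, estimating $m$, and boosting.} Next, each ideal sample must be produced in the random walk model. Since the simple walk has stationary distribution $\pi(v)=d_v/(2m)$, a walk of length $O(\mix\log n)$ from the arbitrary seed $s$ reaches total-variation distance $1/\poly(n)$ from $\pi$, producing the $\log n$ factor. One additional random-neighbor step converts a $\pi$-sample into an approximately uniform edge sample, and a rejection step, following Eden et al.'s random-walk adaptation, reshapes this into the $\min(d_u,d_v)$-weighted distribution of Step 1. The mixing cost per sample is $O(\mix)$, while the subsequent local queries (random neighbor and edge test) contribute only $O(1)$ additive cost; this is what separates the $m\,\mix\,\nt_{\emax}/\NT$ term (walk-bounded) from the $m\alpha/\NT$ term (local-query-bounded) in the final bound. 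The unknown $m$ is estimated in a separate phase using the birthday-paradox walk subroutine of Dasgupta et al.\ and Chierichetti et al., which returns a $(1\pm\eps)$-approximation in $O(\sqrt{m}\,\mix)$ queries --- exactly the third term. A median-of-means boost over $O(\log(1/\delta))$ independent repetitions supplies the $\log(1/\delta)$ factor, and a union bound across the three error sources closes the $(\eps,\delta)$ guarantee.

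\emph{Main obstacle.} The principal difficulty is Step 2. The ideal estimator requires an edge distribution weighted by $\min(d_u,d_v)$, which random walks do not produce natively; they yield vertex samples weighted by $d_v$, or approximately uniform edge samples. Reshaping the walk output into the right distribution while simultaneously controlling (i) the mixing bias accumulated over long walks, (ii) the acceptance probability of the rejection step, and (iii) the fact that $m$ is only approximately known when the walks are performed, is the delicate bookkeeping that drives $\mix$ into the first term without inflating the variance bound. The random-walk machinery of Eden et al., combined with Chierichetti et al.'s $m$-estimator, is what bridges these gaps.
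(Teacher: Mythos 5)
Your Step 1 matches the paper's setup (edges weighted by $d_e=\min(d_u,d_v)$, triangles assigned to the edge between their two lowest-degree vertices, Chiba--Nishizeki controlling $\sum_e d_e = O(m\alpha)$), but Step 2 is where the paper's actual idea lives, and your version of it has a genuine gap. You propose to manufacture each weighted sample independently: run a fresh $O(\mix\log n)$-length walk, convert to a near-uniform edge, then rejection-sample into the $d_e$-proportional distribution. Two problems. First, the rejection step is not controlled: accepting a uniform edge $e$ with probability $d_e/D$ for some upper bound $D\ge\max_e d_e$ has acceptance rate $d_E/(mD)$, and $D$ is neither known in this model nor small in general, so the cost of this step does not obviously stay within the claimed budget. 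Second, even granting the rejection step, your accounting does not produce the theorem's bound. The number of ideal samples required by the second-moment analysis is $\tO(d_E/\NT)=\tO(m\alpha/\NT)$, and if each one costs a fresh $O(\mix)$ walk, the total is $\tO(\mix\, m\alpha/\NT)$ --- i.e., $\mix$ multiplies $\alpha$, whereas in the theorem $\mix$ multiplies $\nt_{\emax}$ and the $m\alpha/\NT$ term is walk-free. Your stated separation between a ``walk-bounded'' $m\,\mix\,\nt_{\emax}/\NT$ term and a ``local-query-bounded'' $m\alpha/\NT$ term is asserted but not derivable from your per-sample-fresh-walk scheme.

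The paper achieves that separation by a different mechanism, which is the missing idea: take \emph{one} long walk of length $\wL$, collect its edge multiset $R$, and sample the $\sL$ edges proportional to $d_e$ \emph{within} $R$, normalizing by the exactly-known quantity $d_R$ (so no rejection and no knowledge of $m$ or $D$ is needed at sampling time). The estimator is then $X=(\overline{m}/\wL)\,d_R\,Y$ where $Y$ estimates $\nt_R/d_R$. The analysis shows $R$ is ``good'' with high probability: $\nt_R$ concentrates around $|R|\NT/m$ via Chebyshev, using that walk edges $\mix$ steps apart are nearly independent --- this is where the variance bound $\mix\,\nt_{\emax}|R|\NT/m$ forces $\wL=\tO(m\,\mix\,\nt_{\emax}/\NT)$ and couples $\mix$ to $\nt_{\emax}$ rather than to $\alpha$ --- while $d_R$ only needs a one-sided Markov bound (costing a $\log n/\eps$ factor in $\sL$). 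The $m\alpha/\NT$ term then comes purely from the number of local subsamples $\sL=\tO(d_E/\NT)$, each costing $O(1)$ queries. Your obstacle paragraph correctly identifies that reshaping the walk output is the crux, but the resolution is not Eden et al.'s rejection machinery; it is ``subsample inside the walk and prove the walk's empirical statistics $\nt_R/d_R$ and $\nt_R$ track their global counterparts.'' (Minor: the $m$-estimate in the paper is the Ron--Tsur collision estimator run on the walk, not the Chierichetti et al.\ subroutine, though the $\sqrt{m}\,\mix$ cost is as you state.)
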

The quantities $\alpha$ and $\mix$ are considered small in social networks. Moreover,
$\nt_{\emax}$ is much smaller than the total triangle count $\NT$. Thus, the bound above
is strictly sublinear for social networks. Moreover, it is known that lower bound
of $\mix$ is necessary even for the simpler problem of sampling vertices~\cite{chierichetti2018complexity}.
\subsection{Main Ideas and Challenges}

Sampling based triangle counting algorithms typically work as follows.
One associates some quantity (like the triangle count of an edge) to each edge,
and estimates the sum of this quantity by randomly sampling edges. If uniform
random edges are present, then one can use standard scaling to estimate 
the sum (\cite{tsourakakis2009doulion,kolountzakis2012efficient,wu2016counting}).
Other methods set up a highly non-uniform distribution, typically biasing more heavily
to higher degree vertices, in the hope of catching triangles with fewer samples
(the classic method being wedge sampling~\cite{schank2005finding,seshadhri2014wedge,rahman2014sampling,turkoglu2017edge}).
This is more efficient in terms of samples, but requires more complex sampling and needs non-trivial normalization
factors, such as total wedge counts. All of these techniques face problems in the random
walk access model. Even if the mixing time $\mix$ is low, one needs $k\mix$ queries to
get $k$ truly independent uniform edges. And it could potentially require even more samples
to get non-uniform wedge based distributions. Indeed, our experiments 
show that this overhead leads to inefficient algorithms in the random walk access model.

To get around this problem, we use the key idea of ordering vertices by degree,
a common method in triangle counting. Inspired by a very recent sublinear clique counting algorithms
of Eden~\etal~\cite{eden2020faster}, we wish to sample edges proportional to the degree
of the smaller degree endpoint. Such ideas have been used to speed up wedge sampling~\cite{turkoglu2017edge,Turk2019}.
Such a sampling biases away from extremely high degree vertices, which
is beneficial in the random walk model.

But how to sample according to this (strange) distribution? Our algorithm
is rather naive: simply take a long random walk, and just perform this sampling
\emph{among} the edges of the random walk. Rather surprisingly, this method
provably works. The proof requires a number of ideas introduced in~\cite{eden2020faster}.
The latter result requires access to uniform random vertices, but we are able to adapt
the proof strategy to the random walk model.

The final algorithm is a direct implementation of the above ideas, though there
is some technicality about the unique assignment of triangles to edges. The proof,
on the other hand, is quite non-trivial and has many moving parts. We need to show
that sampling according to the biased distribution among the edges
of the random walk has similar statistical properties to sampling from all edges.
This requires careful use of concentration bounds to show that various statistics
of the entire set of edges are approximately preserved in the set of random walk edges.
Remarkably, the implementation follows the theoretical algorithm exactly. We require no
extra heuristics to make the algorithm practical.

We mention some benefits of our approach, in the context of practical sublinear algorithms.
We do not need non-trivial scaling factors, like total wedge counts (required for any wedge sampling
approach). Also, a number of practical MCMC methods perform random walks on ``higher order"
Markov Chains where states are small subgraphs of $G$. Any theoretical analysis requires
mixing time bounds on these Markov Chains, and it is not clear how to relate these bounds
to the properties of $G$. Our theoretical analysis relies only on the mixing time of $G$,
and thus leads to a cleaner main theorem. Moreover, a linear dependence of the mixing time
is likely necessary, as shown by recent lower bounds~\cite{chierichetti2018complexity}.

\section{Related Work}

There is an immense body of literature on the triangle counting problem,
and we only focus on work that is directly relevant.
For a detailed list of citations, we refer the reader to the tutorial~\cite{SeTi19}.

There are a number of exact triangle counting algorithms,
often based on the classic work of Chiba-Nishizeki~\cite{Chiba1985}.
Many of these algorithms have been improved by using clever heuristics, parallelism,
or implementations that reduce disk I/O~\cite{schank2005finding,hu2014efficient,green2014fast,azad2015parallel,kim2016dualsim}.

Approximate triangle counting algorithms are significantly
faster and more widely studied in both theory and practice. 
Many techniques have been explored to solve this problem 
efficiently in the streaming 
model, the map-reduce model, and the general static model (or RAM model).
There are plethora of practical
streaming algorithms~\cite{BecchettiBCG08,jha2013space,PavanTTW13,tangwongsan2013parallel,ahmed2014graph,jha2015counting,stefani2017triest},
MapReduce algorithms~\cite{cohen2009graph,Suri2011,kolda2014counting}, and other distributed models~\cite{kolountzakis2012efficient,arifuzzaman2013patric} as well. 
Even in the static settings,
sampling based approaches have been proven quite efficient
~\cite{tsourakakis2009doulion,seshadhri2014wedge,etemadi2016efficient,turkoglu2017edge,Turk2019} for this problem.

\emph{All} of these algorithms read the entire input, with the notable exception of MCMC based algorithms~\cite{rahman2014sampling,chen2016general}. 
These algorithms perform a random walk in a ``higher" order graph, that can be locally constructed by looking at the neighborhoods of a vertex.
We can implement these algorithms in the random walk access model, and indeed, consider these to be the state-of-the-art triangle counting
algorithms for this model. We note that these results are not provably sublinear (nor do they claim to be so). Nonetheless, we find they perform
quite well in terms of making few queries to get accurate estimates for the triangle count.
Notably, the Subgraph Random Walk algorithm of~\cite{chen2016general} is the only other algorithm
that gets reasonable performance in all instances, and the VertexMCMC algorithm of~\cite{rahman2014sampling} is the only algorithm that ever outperforms \ALG (though in other instances, it does not seem
to converge).

We note that the Doulion algorithm of Tsourakakis~\etal~\cite{tsourakakis2009doulion}
and the sampling method of Wu~\etal~\cite{wu2016counting} can also be implemented in the random
walk access model. Essentially, we can replace their set of uniform random edges with the set
of edges produced by a random walk (with the hope that after the mixing time, the set ``behaves"
uniform). We observe that these methods do not perform too well.

From a purely theoretical standpoint, there have been recent sublinear algorithms for triangle counting
by Eden~\etal~\cite{ELRS15, eden2020faster}. These algorithms require access to uniform random samples
and cannot be implemented directly in the random walk model.
Nonetheless, the techniques from these results are highly relevant for \ALG.
In particular, we borrow the two-phase sampling idea
from~\cite{ELRS15,ERS17} to simulate the generation of edge samples according
to edge based degree distribution. 

The {\em random walk access model} is formalized by
Dasgupta~\etal~\cite{DaKu14} in the context of the average
degree estimation problem. Prior to that, there have 
been several works on estimating basic statistics of a massive network
through {\em crawling}. Katzir~\etal\cite{katzir2011estimating},
Hardiman~\etal~\cite{hardiman2009calculating}, and 
Hardiman and Katzir~\cite{katzir2015estimating} used collisions
to estimate graph sizes and clustering coefficients of individual vertices.
Cooper~\etal~\cite{cooper2014estimating} used random walks for estimating
various network parameters, however, their sample complexity is 
still is at least that of collision based approaches.
Chierichetti~\etal~\cite{ChDa+16} studied the problem of
uniformly sampling a node in the {\em random walk access} model. 
Many triangle counting estimators are built on uniform random node 
samples. However, using the uniform node sampler of  
Chierichetti~\etal~\cite{ChDa+16} leads to an expensive triangle 
estimator, because of the overhead of generating vertex samples.

There are quite a few sampling methods based on random crawling:
forest-fire~\cite{LF06}, snowball sampling~\cite{maiya2011benefits}, 
and expansion sampling~\cite{LF06}. However, they do not lead
to unbiased estimators for the triangle counting problem.
It is not clear whether such sampling methods can generate
provably accurate estimator for this problem.
For a more detailed survey of various sampling
methods for graph parameter estimation, we refer 
to the works of Leskovec and  Faloutsos~\cite{LF06}, 
Maiya and Berger-Wolf~\cite{maiya2011benefits}, 
and Ahmed, Neville, and Kompella~\cite{ahmed2014network}. 

\section{Preliminaries}
\label{sec:prelim}
In this paper, graphs are undirected and unweighted.
We denote the input graph by $G=(V,E)$, and put $|V|=n$ and
$|E|=m$. We denote the number of triangles in the input graph
by $\NT$. For an integer $k$, we denote the set $\{1,2,,\ldots k\}$
by $[k]$. All the logarithms are base $2$.

For a vertex $v\in V$, we denote its neighborhood as $N(v) := \{u:\{u,v\}\in E\}$, and degree as $d(v) := |N(v)|$. For an
edge $e =\{u,v\}\in E$, we define degree of $e$ as the minimum
degree of its endpoint: $d(e):= \min\{d(u),d(v)\}$. Similarly,
we define the neighborhood of $e$: $N(e) = N(u)$ if $d(u) < d(v)$,
$N(e) = N(v)$ otherwise. 

We consider a degree based ordering of the vertices $\prec_{\deg}$:
for two vertices $u,v\in V$, $u \prec_{\deg} v$ iff $d(u) < d(v)$
or $d(u)=d(v)$ and $u$ precedes $v$ according some fixed ordering, say
lexicographic ordering. Every triangle is uniquely assigned to an edge
as follows. For a triangle $\{v_1, v_2, v_3\}$ such that $v_1 \prec_{\deg} v_2 \prec_{\deg} v_3$,
we assign it to the edge $(v_1, v_2)$.
We denote
the number of triangles associated with an edge $e$ by $\nt_e$.
Clearly, $\sum_{e\in E} \nt_e = \NT$. We denote by $\nt_{\max}$
the maximum number of triangles associated with any edge:
$\nt_{\max} = \max_{e\in E} \nt_e$.

We extend the notion of $d_e$ and $t_e$ to a collection of
edges $R$ naturally: $d_R = \sum_{e\in R} d_e$ and 
$\nt_R = \sum_{e\in R} t_e$. Note that if $R$ is a multi-set,
then we treat every occurrences of an edge $e$ as a distinct 
member of $R$, and the quantities $d_R$ and $\nt_R$ reflect these.

We denote the degeneracy (or the maximum core number) of the input graph by $\alpha$.
Degeneracy, or the maximum core number, is the smallest integer
$k$, such that for every subgraph in the input graph, there is a
vertex of degree at most $k$.
Chiba and Nishizeki proved the following connection between
$\alpha$ and $d_E$.
\begin{lemma}[Chiba and Nishizeki~\cite{Chiba1985}]
\label{lem:chiba}
$d_E = \sum_{e\in E}d_e = O(m \alpha$).
\end{lemma}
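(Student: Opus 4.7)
The plan is to prove this via the classical degeneracy (core) ordering. Recall that if $G$ has degeneracy $\alpha$, then there exists an ordering $\sigma = (v_1, v_2, \ldots, v_n)$ of the vertices such that for every $i$, the vertex $v_i$ has at most $\alpha$ neighbors in $\{v_{i+1}, \ldots, v_n\}$. This follows directly from the definition: in the subgraph induced by $\{v_i, v_{i+1}, \ldots, v_n\}$ there is some vertex of degree at most $\alpha$, and one builds $\sigma$ greedily in reverse by repeatedly removing such a vertex.

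Given $\sigma$, I will orient each edge from its earlier endpoint to its later endpoint. By the property above, every vertex $u$ has out-degree at most $\alpha$ in this orientation. Call the source endpoint of an edge its \emph{tail}.

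The key observation is now trivial: for any edge $e = \{u,v\}$ with tail $u$, we have
\[
d_e \;=\; \min\{d(u), d(v)\} \;\le\; d(u).
\]
Summing over all edges and grouping by tail,
\[
\sum_{e \in E} d_e \;\le\; \sum_{u \in V} d(u) \cdot \bigl|\{e : u \text{ is the tail of } e\}\bigr| \;\le\; \alpha \sum_{u \in V} d(u) \;=\; 2m\alpha,
\]
which gives the claimed $O(m\alpha)$ bound.

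There is no real obstacle in this proof; the only nontrivial ingredient is the existence of the degeneracy ordering with out-degree at most $\alpha$, which is standard. The rest is a one-line double counting using that $d_e$ is a minimum and therefore bounded by the degree of either endpoint — in particular, the tail, whose out-degree is controlled by $\alpha$.
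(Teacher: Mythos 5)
Your proof is correct. The paper does not prove this lemma at all---it is imported as a black box from Chiba and Nishizeki---so there is no internal proof to compare against; your argument supplies a clean, self-contained justification. The two ingredients you use are both sound: the existence of an ordering in which every vertex has at most $\alpha$ neighbors among its successors follows from the paper's definition of degeneracy by iteratively peeling off a vertex of degree at most $\alpha$, and the double counting
\[
\sum_{e\in E} d_e \;\le\; \sum_{e\in E} d\bigl(\mathrm{tail}(e)\bigr) \;=\; \sum_{u\in V} d(u)\cdot \mathrm{outdeg}(u) \;\le\; \alpha \sum_{u\in V} d(u) \;=\; 2m\alpha
\]
is valid because $d_e=\min\{d(u),d(v)\}$ is bounded by the degree of \emph{either} endpoint, in particular the tail. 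For context, the original Chiba--Nishizeki argument proceeds through arboricity: decompose $E$ into $a(G)$ forests, root each tree, and charge $\min\{d(u),d(v)\}$ to the child endpoint, so each vertex is charged at most once per forest and the sum is at most $2m\cdot a(G)\le 2m\alpha$. Your degeneracy-ordering version reaches the same bound more directly for the form stated in the paper, at the cost of being specific to degeneracy rather than the (slightly stronger) arboricity bound.
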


We revisit a few basic notions about the random walk on a connected, undirected, 
non-bipartite graph. For every graph, $\pi(v) = \frac{d(v)}{2m}$
is the stationary distribution. We denote
the mixing time of the input graph by $\mix$. 

We use the following concentration bounds for analyzing our algorithms.
In general, we use the shorthand $A \in (1\pm \eps)B$ for $A \in [(1-\eps)B, (1+\eps)B]$.
\begin{theorem} \label{thm:conc}
\begin{asparaitem}
\item [{\bf Chernoff Bound}:] 
  Let $X_1,X_2,\ldots,X_r$ be mutually independent indicator random variables with expectation $\mu$. Then, for every $\eps$ with $0< \eps <1$,
$\Pr\left[ \sum_{i=1}^{r} X_i/r \notin (1\pm\eps) \mu\right] \le 2\exp\left( - \eps^2 r \mu  / 3 \right)$.
\item [{\bf Chebyshev Inequality}:] 
Let $X$ be a random variable with expectation $\mu$
and variance $\Var[X]$. Then, for every $\eps>0$,
$\Pr\left[ X \notin (1\pm \eps)\mu\right] \leq \frac{\Var[X]}{ \eps^2 \mu^2}$.
\end{asparaitem}
\end{theorem}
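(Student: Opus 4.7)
The plan is to prove both inequalities from first principles using Markov's inequality applied to a suitable nonnegative function of the random variable in each case. Chebyshev follows by applying Markov to the squared deviation, while the Chernoff tail bound requires the standard moment generating function (exponential moment) trick followed by an optimization in a free parameter $t$.

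For Chebyshev, let $Y = (X-\mu)^2$, which is nonnegative with $\EX[Y] = \Var[X]$. The event $X \notin (1\pm\eps)\mu$ is exactly the event $|X-\mu| \ge \eps\mu$, which equals $Y \ge \eps^2\mu^2$. Markov's inequality applied to $Y$ gives $\Pr[Y \ge \eps^2\mu^2] \le \EX[Y]/(\eps^2\mu^2) = \Var[X]/(\eps^2\mu^2)$, completing that part.

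For Chernoff, let $S_r = \sum_{i=1}^r X_i$, so $\EX[S_r] = r\mu$. I would handle the two tails separately. For the upper tail, for any $t > 0$, Markov applied to the nonnegative variable $e^{tS_r}$ yields $\Pr[S_r \ge (1+\eps)r\mu] \le \EX[e^{tS_r}]/e^{t(1+\eps)r\mu}$. Independence factorizes the MGF as $\EX[e^{tS_r}] = \prod_i \EX[e^{tX_i}]$, and for each indicator $\EX[e^{tX_i}] = 1 + \mu(e^t - 1) \le \exp(\mu(e^t-1))$, using $1+x \le e^x$. Therefore $\EX[e^{tS_r}] \le \exp(r\mu(e^t-1))$, and plugging in the optimal choice $t = \ln(1+\eps)$ gives the upper tail bound $\exp\bigl(r\mu(\eps - (1+\eps)\ln(1+\eps))\bigr)$. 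The lower tail is symmetric: for $t > 0$, Markov applied to $e^{-tS_r}$ together with $\EX[e^{-tX_i}] \le \exp(\mu(e^{-t}-1))$ and the choice $t = -\ln(1-\eps)$ gives $\exp\bigl(r\mu(-\eps - (1-\eps)\ln(1-\eps))\bigr)$. A union bound over the two tails introduces the factor $2$.

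The main obstacle is the final analytic step: showing both optimized expressions are bounded by $\exp(-\eps^2 r\mu/3)$ for $0 < \eps < 1$. For the upper tail this amounts to verifying $g(\eps) := (1+\eps)\ln(1+\eps) - \eps \ge \eps^2/3$, which I would show by Taylor expanding $\ln(1+\eps) = \eps - \eps^2/2 + \eps^3/3 - \cdots$ and noting $g(\eps) = \eps^2/2 - \eps^3/6 + \eps^4/12 - \cdots$, which exceeds $\eps^2/3$ on $(0,1)$ since the alternating tail is dominated by the leading quadratic. An analogous expansion of $(1-\eps)\ln(1-\eps) + \eps$ handles the lower tail (in fact with a slightly better constant, so the same $1/3$ suffices). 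Combining these with the union bound yields the stated Chernoff inequality.
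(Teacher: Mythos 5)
Your proof is correct. The paper itself offers no proof of this statement---it is recorded in the preliminaries as a pair of standard concentration bounds to be invoked later---so there is no in-paper argument to compare against. Your derivation is the classical one: Markov's inequality applied to $(X-\mu)^2$ for Chebyshev, and the exponential-moment (MGF) method with the optimal choices $t=\ln(1+\eps)$ and $t=-\ln(1-\eps)$ for the two Chernoff tails, followed by a union bound. The one step that deserves care, the analytic inequality $(1+\eps)\ln(1+\eps)-\eps \ge \eps^2/3$ on $(0,1)$, is handled correctly: the alternating series $\eps^2/2-\eps^3/6+\eps^4/12-\cdots$ has terms decreasing in magnitude for $\eps<1$, so truncating after the negative term gives the lower bound $\eps^2/2-\eps^3/6 \ge \eps^2/2-\eps^2/6=\eps^2/3$, and the lower tail gives the stronger $\eps^2/2$, so the factor $2$ from the union bound is all that remains. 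Everything checks out.
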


%

\section{The Main Result and \ALG}
We begin with the description of our triangle counting algorithm,
\ALG(~\Cref{alg:triangle_estimate}). 

It takes
three input parameters: the length of the random walk $\wL$,
the number of subsamples (explained below) $\sL$, and an estimate $\widehat{\mix}$ of the mixing time.
\ALG starts with an arbitrary vertex of the graph provided by the model.
Then it performs a random walk of length $\wL$ and collects the
edges in a ordered multi-set $R$. For each edge $e =\{u,v\}\in R$,
it computes the degree $d_e$. Recall the definition
of degree of an edge: $d_e= \min \{d_u,d_v\}$. 
Then, \ALG samples $\sL$ edges from $R$, where edge $e$
is sampled with probability proportional to $d_e$.

For an edge $e$ sampled in the above step, \ALG
samples a uniform random neighbor $w$ from $N(e)$; recall $N(e)$ denote the neighbors of the lower degree end point. 
Finally, using edge queries, \ALG checks whether a
triangle is formed by $\{e,w\}$. If it forms a triangle, 
\ALG checks
if the triangle is uniquely assigned to the edge $e$ by querying the degrees of the constituent vertices (see~\Cref{sec:prelim} for the assignment rule).

To compute the final estimate for triangles, \ALG requires 
an estimate for the number of edges. 
To accomplish this task, we design a collision based edge estimator
in~\Cref{alg:edge_estimate}, based on a result of Ron and Tsur~\cite{RT16}.

\begin{algorithm}[!ht]
\caption{\ALG --- Triangle Counting Estimator}
\label{alg:triangle_estimate}
    \begin{algorithmic}[1]
    \Procedure{\ALG}{integer $\wL$, integer $\sL$, integer $\widehat{\mix}$}
    \label{proc:EstTri}
    \State Let $\seed$ be some arbitrary vertex provided by the model.
    \State Let $R$ be the multiset of edges on a $\wL$-length random walk from $\seed$.
    \For {$i=1$ to $\sL$} \label{line:loop}
    \State Sample an edge $e\in R$ independently with prob. $d_e/d_R$.
    \State Query a uniform random neighbor $w$ from $N(e)$.
    \State Using edge query, check if $\{e,w\}$ forms a triangle.
    \State If $\{e,w\}$ forms a triangle $\tau$: query for degrees of all vertices
    in $\tau$ and determine if $\tau$ is associated to $e$.
    \State If $\tau$ is associated to $e$, set $Y_i=1$; else set $Y_i=0$.
    \EndFor
    \State Set $Y = \frac{1}{\sL} \sum_{i=1}^{\ell} Y_i$. \label{line:Y}
    \State Let $\overline{m}$ = \textsc{EdgeCountEstimator}$~(R,\widehat{\mix})$
    \State Set $X = \frac{\overline{m}}{\wL} \cdot d_R \cdot Y$.
    \State return $X$.
    \EndProcedure
    \end{algorithmic}
\end{algorithm}

\begin{algorithm}[!ht]
\caption{Edge Count Estimator}
\label{alg:edge_estimate}
    \begin{algorithmic}[1]
    \Procedure{\textsc{EdgeCountEstimator}}{edge set $R$, integer $\widehat{\mix}$}
        \label{proc:EstEdge}
        \For {$i=1$ to $\widehat{\mix}$}
        \State Let $R_i = (e_i,e_{i+\widehat{\mix}},e_{i+2\widehat{\mix}},\ldots)$. \label{line:R_i)}
        \State Set $c_i=$number of pairwise collision in $R_i$. \label{line:R_i}
        \State Set $Y_i= \binom{|R_i|}{2} / c_i$.
        \EndFor
        \State Set $Y=\frac{1}{\widehat{\mix}} \sum_{i} Y_i$.
        \State return $Y$.
        \EndProcedure
    \end{algorithmic}
\end{algorithm}

We state the guarantees of \textsc{EdgeCountEstimator}. It is a direct consequence
of results of Ron and Tsur~\cite{RT16}, and we defer the formal proof to the~\Cref{appendix:proof}.

\begin{theorem}
\label{thm:edge_estimator}
Let $\eps >0$ be some constant, $\widehat{\mix} \geq \mix$, and
$|R| \geq \frac{\log n}{\eps^2}\cdot \mix \cdot \sqrt{m}$. Then, \textsc{EdgeCountEstimator} outputs $\overline{m} \in (1\pm \eps)m$ with probability at least $1-o(1/n)$.
\end{theorem}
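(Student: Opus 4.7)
The plan is to reduce this to the Ron--Tsur~\cite{RT16} collision-based size estimator, whose setting requires (approximately) independent uniform samples from the ground set. The key structural observation is that on any connected undirected graph, the edge-transition Markov chain induced by the vertex random walk has the uniform distribution on directed edges as its stationary distribution: if we are at $u$ with stationary probability $d(u)/(2m)$ and move to a uniformly random neighbor, then each directed edge $(u,v)$ is traversed with probability exactly $1/(2m)$. Consequently, once $\widehat{\mix} \geq \mix$, the marginal distribution of any single entry of $R$ past the first $\widehat{\mix}$ steps is polynomially close in total variation to uniform on the $2m$ directed edges.

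Now consider a fixed subsample $R_i = (e_i, e_{i+\widehat{\mix}}, e_{i+2\widehat{\mix}}, \ldots)$. Because consecutive entries are exactly $\widehat{\mix}$ steps apart in the walk, a standard coupling argument (the same one used in \cite{RT16}) bounds the total variation distance between the joint law of $R_i$ and the product distribution of $|R_i|$ independent uniform samples from the directed edge set by a quantity that shrinks geometrically with $\widehat{\mix}$; by boosting $\widehat{\mix}$ by an $O(\log n)$ factor if necessary (absorbed in the $\log n$ of $|R|$) this can be driven to $1/\mathrm{poly}(n)$. Thus, up to a $1/\mathrm{poly}(n)$ additive failure probability, we may analyze $R_i$ as if it consisted of truly i.i.d.\ uniform directed-edge samples.

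Under this reduction, the theorem is exactly the birthday/collision-based size estimator of Ron--Tsur. For $k$ i.i.d.\ uniform samples from a set of cardinality $2m$, the pairwise collision count $c$ satisfies $\EX[c] = \binom{k}{2}/(2m)$, and a second-moment (Chebyshev) computation shows $\binom{k}{2}/c \in (1\pm \eps)\cdot 2m$ with constant probability whenever $k \gtrsim \sqrt{m}/\eps^2$. Each $R_i$ has size at least $|R|/\widehat{\mix} \geq \frac{\log n}{\eps^2}\sqrt{m}$, which comfortably meets this threshold (with the extra $\log n$ giving room for the collision concentration). The final averaging step $Y = \frac{1}{\widehat{\mix}}\sum_i Y_i$ over the $\widehat{\mix}$ subsamples, combined with a median-of-means or Chernoff boost, amplifies the success probability to $1 - o(1/n)$. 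A constant factor of $2$ between the directed and undirected edge counts is absorbed in the definition of the estimator (equivalently, each undirected edge collides with itself through either orientation, and this constant is known).

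The main technical subtlety, and the place to spend care, is step two: bounding the aggregate effect of the tiny but nonzero dependence between the $\widehat{\mix}$-spaced samples of $R_i$ on a quadratic statistic like $c_i$. Since the collision count aggregates over $\binom{|R_i|}{2}$ pairs, we need the per-pair TV slack to be at most $1/\mathrm{poly}(n)$ with a sufficiently large polynomial so that $|R|^2$ times the slack remains $o(1/n)$; this is exactly why Ron--Tsur multiply the raw mixing time by $\Theta(\log n)$ when setting their walk parameters. Once this bookkeeping is in place, the rest of the proof is routine second-moment calculation, and the details are deferred to~\Cref{appendix:proof}.
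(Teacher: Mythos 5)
Your proposal follows essentially the same route as the paper's proof in Appendix~A: reduce to the Ron--Tsur collision estimator on the $\widehat{\mix}$-spaced subsamples $R_i$, compute the first and second moments of the pairwise collision count, apply Chebyshev to obtain constant-probability accuracy from $|R_i|\gtrsim \sqrt{m}/\eps^2$ samples, and boost over the $\widehat{\mix}$ repetitions. If anything you are slightly more careful than the paper, which simply asserts that samples $\mix$ steps apart ``are independent,'' whereas you correctly track the total-variation slack and observe that it must be small enough to survive aggregation over the $\binom{|R_i|}{2}$ collision pairs.
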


\subsection{Theoretical Analysis of \ALG}
We provide a theoretical analysis of \ALG and prove ~\Cref{thm:tri_count}. 
We first show that if the collection of edges $R$ exhibits some ``nice" properties, then
we have an accurate estimator. Then we prove that the collection of edges 
$R$ produced by the random walk has these desired properties.
Our goal is to prove that the output of
\ALG, $X$, is an $(\eps,\delta)$-estimator for the triangle counts.
For ease of exposition, in setting parameter values of 
$\wL$  and $\sL$, we hide the dependency
on the error probability $\delta$. A precise calculation
would set the dependency to be $\log (1/\delta)$, as standard
in the literature.

We first show that the random variable $Y$ (~\cref{line:Y})
roughly captures the ratio $\NT/d_E$. To show this,
we first fix an arbitrary collection of edges $R$,
and simulate \ALG on it. We show that, in expectation,
$Y$ is going to $t_R/d_R$. For the sake of clear presentation,
we denote the value of the random variable $Y$, when run
with the edge collection $R$, to be $Y_R$. Note that, $Y_R$
is a random variable nevertheless; the source of the
randomness lies in the $\sL$ many random experiments that
\ALG does in each iteration of the for loop at ~\cref{line:loop} of ~\Cref{alg:triangle_estimate}
\begin{lemma}
\label{lem:triangle_est}
Let $R$ be a fixed collection of edges, and $Y_R$ denote the value of the random variable $Y$ on the fixed set $R$ (~\cref{line:Y} of ~\Cref{alg:triangle_estimate}).  Then,
\begin{enumerate}
    \item $\EX [Y_R] = \frac{\nt_R}{d_R}$,
    \item $\Pr \left[~|Y_R - \EX[Y_R]| \geq \eps \EX [Y_R]~ \right] 
    \leq \exp \left( - \sL \cdot \frac{\eps^2}{3} \cdot \frac{\nt_R}{d_R}  \right)$.
\end{enumerate}
\end{lemma}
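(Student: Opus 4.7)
The plan is to argue both parts directly from the sampling rule inside the \texttt{for} loop at \cref{line:loop}, treating each iteration as an i.i.d.\ Bernoulli trial conditioned on the fixed multiset $R$.

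For the expectation, I would first focus on a single iteration and compute $\EX[Y_i]$ by conditioning on which edge $e\in R$ is sampled. By construction, $e$ is drawn from $R$ with probability $d_e/d_R$. Conditioned on $e=\{u,v\}$ with $d(u)\le d(v)$, the neighbor $w$ is uniform in $N(e)=N(u)$, a set of size exactly $d_e$. The outcome $Y_i=1$ occurs iff $\{u,v,w\}$ is a triangle and it is assigned to $e$ under $\prec_{\deg}$, i.e., iff $w\in N(v)$ and $v\prec_{\deg} w$. By the very definition of $t_e$ (the number of triangles uniquely assigned to $e$), the number of such $w$'s is $t_e$, so the conditional probability is $t_e/d_e$. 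Summing over $e$,
\begin{equation*}
\EX[Y_i] \;=\; \sum_{e\in R} \frac{d_e}{d_R}\cdot\frac{t_e}{d_e} \;=\; \frac{1}{d_R}\sum_{e\in R} t_e \;=\; \frac{t_R}{d_R}.
\end{equation*}
Linearity of expectation over the $\ell$ i.i.d.\ iterations then gives $\EX[Y_R]=t_R/d_R$.

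For the concentration, I would observe that each $Y_i\in\{0,1\}$ and the iterations are mutually independent given $R$ (since each uses fresh randomness in selecting $e$ and $w$). Thus $Y_R=\frac{1}{\ell}\sum_i Y_i$ is an average of $\ell$ i.i.d.\ indicator variables with common mean $\mu=t_R/d_R$. Applying the Chernoff bound stated in \Cref{thm:conc} with this $\mu$ yields
\begin{equation*}
\Pr\!\left[\,|Y_R-\EX[Y_R]|\ge \eps\,\EX[Y_R]\,\right] \;\le\; 2\exp\!\left(-\tfrac{\eps^2\,\ell\,\mu}{3}\right) \;=\; 2\exp\!\left(-\tfrac{\eps^2\,\ell}{3}\cdot\tfrac{t_R}{d_R}\right),
\end{equation*}
matching the claimed bound (the constant factor of $2$ is absorbed into the eventual $\log(1/\delta)$ accounting, as flagged in the surrounding text).

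There is no real obstacle here; the only subtlety worth double-checking is the bookkeeping of the unique-assignment rule, namely that the count $t_e$ used in the expectation calculation matches exactly the event tested at \cref{proc:EstTri} via the degree queries on the three triangle vertices. Once this is verified, both statements follow from two routine ingredients (linearity of expectation and Chernoff on independent Bernoullis). The genuinely nontrivial work is deferred to the subsequent lemmas, where one must show that the random-walk multiset $R$ itself has $t_R/d_R \approx T/d_E$ with high probability — but that lies outside the present statement.
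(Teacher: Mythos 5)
Your proposal is correct and follows essentially the same route as the paper: condition on the sampled edge $e$ (drawn with probability $d_e/d_R$), observe that the conditional success probability is $t_e/d_e$, sum to get $\EX[Y_i]=t_R/d_R$, and apply the stated Chernoff bound to the $\sL$ independent indicators. Your added justification of why the conditional probability equals $t_e/d_e$ (via $|N(e)|=d_e$ and the unique-assignment rule), and your note about the factor of $2$ in the Chernoff bound, are both accurate refinements of the paper's terser argument.
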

\begin{proof}
We first prove the expectation statement. 
Let $e_i$ be the random variable that denotes the edge sampled
in the $i$-th iteration of the for loop (~\cref{line:loop}).
Consider the random variable $Y_i$.
We have
\begin{align*}
    \Pr[Y_i=1] &= \sum_{e\in R} \Pr[e_i = e]\Pr [Y_i =1 | e_i = e] \,,  \\
    &= \sum_{e\in R} \frac{d_e}{d_R} \Pr [Y_i =1 | e_i = e] = \sum_{e \in R} \frac{\nt_e}{d_R} = \frac{\nt_R}{d_R} \,.
\end{align*}
Where the second last equality follows as for a fixed edge $e$, the probability that $Y_i=1$ is exactly $t_e/d_e$.
Since $Y= (1/\sL) \sum_{i=1}^{\sL} Y_i $,
by linearity of expectation, we have the item (1) of the lemma.

For the second item, we apply the Chernoff bound in~\Cref{thm:conc}.
\end{proof}

Clearly, for any arbitrary set $R$, we may not have the desired
behavior of $Y$ that it concentrates around $\NT/d_E$. 
To this end, we first define the desired properties of $R$ and
then show \ALG produces such an edge collection $R$ with high 
probability.
\begin{definition}[A {\em good} collection of edges]
We call an edge collection $R$ {\em good}, if it satisfies the 
following two properties:
\begin{align}
    &\frac{\nt_{R}}{d_R} \ge (1-\eps) \cdot \frac{\eps}{\log n} \cdot \frac{\NT}{d_E} \label{eqn:prop_1} \\
    & \nt_R \in \left[ (1-\eps) |R| \cdot \frac{\NT}{m} ~~,~~  (1+\eps) |R| \cdot \frac{\NT}{m} \right] \label{eqn:prop_2}
\end{align}
\end{definition}
For now we assume the edge collection $R$
produced by \ALG is {\em good}. Observe that,
under this assumption, the expected value of 
$Y_R$ is  $\nt_R/d_R \geq \tO(\NT/d_E)$. In the next lemma, we 
show that for the setting of $\sL= \tO(d_E/\NT)$, $Y_R$
concentrates tightly around its mean.

\begin{lemma}
\label{lem:goodY}
Let $0 < \eps < 1/2$ and $c>6$ be some constants, and $\sL = \frac{c \log^2 n}{\eps^3}\cdot \frac{d_E}{\NT}$. 
Conditioned on $R$ being {\em good}, 
with probability at least $1- o(1/n)$, 
$|Y_R - \EX[Y_R]| \leq \eps \EX[Y_R] $.
\end{lemma}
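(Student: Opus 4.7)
The plan is to apply item (2) of \Cref{lem:triangle_est} directly, using the goodness hypothesis to lower bound $\nt_R/d_R$, and then verify that the chosen value of $\sL$ makes the Chernoff exponent grow faster than $\log n$.

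More concretely, I would proceed as follows. By item (2) of \Cref{lem:triangle_est}, conditioned on any fixed $R$,
\[
\Pr\bigl[\,|Y_R - \EX[Y_R]| \ge \eps\, \EX[Y_R]\,\bigr] \le \exp\!\left(-\sL \cdot \frac{\eps^2}{3} \cdot \frac{\nt_R}{d_R}\right).
\]
Since $R$ is assumed good, property \eqref{eqn:prop_1} gives
\[
\frac{\nt_R}{d_R} \ge (1-\eps)\cdot\frac{\eps}{\log n}\cdot\frac{\NT}{d_E}.
\]
Substituting the chosen value $\sL = \tfrac{c\log^2 n}{\eps^3}\cdot\tfrac{d_E}{\NT}$ into the exponent, the factors of $d_E/\NT$ and $\NT/d_E$ cancel, as do two factors of $\eps$ and one factor of $\log n$, leaving
\[
\sL \cdot \frac{\eps^2}{3}\cdot \frac{\nt_R}{d_R} \ge \frac{c(1-\eps)}{3}\,\log n.
\]

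Finally, for $\eps < 1/2$ and $c > 6$ we have $c(1-\eps)/3 > c/6 > 1$, so the failure probability is at most $n^{-c(1-\eps)/3} = o(1/n)$, as claimed. I don't expect any serious obstacle: the lemma is essentially a Chernoff calculation, and the whole point of the goodness definition \eqref{eqn:prop_1} — and of the specific choice of $\sL$ — is precisely to make the exponent exceed $\log n$ by a constant factor. The only thing worth being careful about is tracking the constants so that the $c > 6$ and $\eps < 1/2$ assumptions suffice; the slack provided by $(1-\eps) > 1/2$ is exactly what allows the final exponent to strictly exceed $1$.
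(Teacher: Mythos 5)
Your proposal is correct and follows exactly the paper's own argument: apply item (2) of \Cref{lem:triangle_est}, lower-bound $\nt_R/d_R$ via property \eqref{eqn:prop_1} of goodness, and plug in $\sL$ to obtain a failure probability of $n^{-c(1-\eps)/3} = o(1/n)$ for $c>6$, $\eps<1/2$. Your constant-tracking at the end is, if anything, slightly more explicit than the paper's.
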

\begin{proof}
Since $R$ is {\em good}, by the first property (~\cref{eqn:prop_1}),
we have 
$\frac{t_R}{d_R} \geq (1-\eps) \cdot \frac{\eps}{\log n} \cdot \frac{\NT}{d_E} $. 
Then, by item (2) in~\Cref{lem:triangle_est}, we have
\begin{align*}
    &\Pr \left[~|Y_R - \EX[Y_R]| \geq \eps \EX [Y_R]~ \right] \\
    & \leq \exp \left( - \frac{c \log^2 n}{\eps^3}\cdot \frac{d_E}{t} 
    \cdot\frac{\eps^2}{3} \cdot (1-\eps) \cdot \frac{\eps}{\log n} \cdot \frac{t}{d_E}  \right) = \frac{1}{n^ {\frac{c(1-\eps)}{3}}}\,,
\end{align*}
where the second inequality follows by plugging in the value of $\sL$.
The lemma follows by the constraints on the values of $c$ and $\eps$.
\end{proof}

We now show that, conditioned on $R$ being {\em good},
the final estimate $X$ is accurate. 

\begin{lemma}
\label{lem:tri}
Condition on the event that $R$ is good. Then,
with probability at least $1-o(1/n)$, 
$X_R \in (1\pm4\eps)\NT$.
\end{lemma}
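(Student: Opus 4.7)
The proof is essentially an error-propagation calculation, combining three already-established approximation results. I will plug each into the expression $X_R = \frac{\overline{m}}{r} \cdot d_R \cdot Y_R$ and track the multiplicative error, with a union bound over the three failure events.

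First I would invoke Lemma~\ref{lem:goodY}: conditioned on $R$ being good, the sampling randomness inside the for-loop gives $Y_R \in (1\pm\eps)\,\tfrac{t_R}{d_R}$ except with probability $o(1/n)$. Next I would apply Theorem~\ref{thm:edge_estimator} to the edge collection $R$ (the hypothesis $|R| \ge \tfrac{\log n}{\eps^2}\mix\sqrt{m}$ will be verified once we set the walk length in the main theorem, so we may assume it here): this yields $\overline{m} \in (1\pm\eps) m$ except with probability $o(1/n)$. Finally, the second goodness property gives deterministically $t_R \in (1\pm\eps)\,|R|\cdot \tfrac{\NT}{m} = (1\pm\eps)\,r\cdot\tfrac{\NT}{m}$.

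Substituting these into the definition of $X$,
\begin{align*}
X_R \;=\; \frac{\overline{m}}{r} \cdot d_R \cdot Y_R
&\in \frac{(1\pm\eps)m}{r}\cdot d_R \cdot (1\pm\eps)\frac{t_R}{d_R} \\
&= \frac{(1\pm\eps)^2 m\cdot t_R}{r}
\;\subseteq\; \frac{(1\pm\eps)^3 m}{r}\cdot r\cdot\frac{\NT}{m}
\;=\; (1\pm\eps)^3\,\NT.
\end{align*}
For $\eps \in (0,1/2)$, $(1+\eps)^3 \le 1+4\eps$ and $(1-\eps)^3 \ge 1-4\eps$, so $X_R \in (1\pm 4\eps)\NT$. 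A union bound over the two random events (the Chernoff event for $Y_R$ and the collision event for $\overline{m}$) keeps the total failure probability at $o(1/n)$.

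There is no genuine obstacle here beyond bookkeeping: the heavy lifting was done in Lemma~\ref{lem:goodY} (which required property~\eqref{eqn:prop_1} so that $\EX[Y_R]$ is large enough for Chernoff to kick in at $\sL$ samples) and in Theorem~\ref{thm:edge_estimator}. The only subtle point worth flagging is that the $(1\pm\eps)^3$-to-$(1\pm 4\eps)$ slack assumes $\eps$ is small enough; this is consistent with the $\eps < 1/2$ hypothesis already present in Lemma~\ref{lem:goodY}. The real work remaining for Theorem~\ref{thm:tri_count} is not this lemma but rather showing that the random walk actually produces a good $R$ with high probability, which drives the $m\,\mix\,t_{\max}/\NT$ and $m\alpha/\NT$ terms in the final query bound.
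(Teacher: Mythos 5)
Your proof is correct and follows essentially the same route as the paper's: combine the concentration of $Y_R$ from Lemma~\ref{lem:goodY}, the second goodness property for $t_R$, and Theorem~\ref{thm:edge_estimator} for $\overline{m}$, then multiply the three $(1\pm\eps)$ factors (the paper accumulates the error additively to $(1\pm 2\eps)$ and then $(1\pm 4\eps)$ rather than writing $(1\pm\eps)^3$, but this is the same bookkeeping). The only nit is that $(1+\eps)^3\le 1+4\eps$ actually requires $\eps\lesssim 0.3$ rather than $\eps<1/2$, a slack the paper is equally cavalier about and which disappears under the final rescaling of $\eps$.
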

\begin{proof} 
Recall that $\EE[Y_R] = \nt_R /d_R$.
Conditioned on the  event of $R$ being good,
with probability at least $1-o(1/n)$, $Y_R$ is closely
concentrated around its expected value (~\Cref{lem:goodY}).
Hence, we have
$Y_R \in (1\pm\eps) \nt_R/d_R$.
Now consider the final estimate $X$. For a fixed set $R$, denote
the value of the random variable $X$ to be $X_R$. Note that,
$X_R$ itself is a random variable where the source of the
randomness is same as that of $Y_R$. More importantly,
$X_R$ is a just a scaling of the random variable $Y_R$:
$X_R = (\overline{m}/\wL)\cdot d_R \cdot Y_R$.
Then, with high probability
    $X_R \in (1\pm\eps)\frac{\overline{m}}{\wL} \cdot d_R \cdot \frac{\nt_R}{d_R}$.

Simplifying we get, with high probability,
$X_R \in (1\pm\eps)\frac{\overline{m}}{\wL} \cdot  {\nt_R}$.
Since $R$ is good, by using the second property (~\cref{eqn:prop_2}), 
and setting $|R|=r$, $X_R \in (1\pm2\eps)\frac{\overline{m}}{m} \cdot  {\NT}$.
By~\Cref{thm:edge_estimator}, with probability at least $1-o(1/n)$, $\overline{m} \in [(1-\eps)m~~,~~(1+\eps)m]$. Hence,
with probability at least $1-o(1/n)$,
$X_R \in \left[ (1-4\eps) \NT, (1+4\eps)\NT  \right]$.
\end{proof}

We now show that with probability at least $1-1/4\log n$,
the edge collection produced by \ALG is {\em good}.
Towards this goal, we analyze the properties of the edge (multi)set
$R$ collected by a random walk. 
For our theoretical analysis,
we first ignore the first $\mix$ many steps in $R$.
Abusing notation, we reuse $R$ to denote the remaining edges.
We denote sum of degrees of the edges in $R$ by $d_R=\sum_{e\in R} d_e$.
In a similar vein, we denote by $\nt_R$ the sum of the triangle count
of the edges in $R$: $\nt_R = \sum_{e \in R} \nt_{e}$.
Note that, the stationary distribution over the edges
is the uniform distribution: $\pi(e) = 1/m$. Hence, a fixed edge 
$e$ is part of $R$ with probability $1/m$. 
In the next lemma, we prove that the random variables $d_R$
and $\nt_R$ are tightly concentrated around their mean.

\begin{lemma}[Analysis of $R$]
\label{lem:good_set}
Let $\eps>0$ and $c>6$ be some constants, and $\wL = \frac{\log n }{\eps^2}\cdot \frac{m \mix  \nt_{\emax}}{\NT}$. Let $R$, $d_R$, and $\nt_{R}$ be as defined above. Then,
\begin{enumerate}
    \item $\EX[d_R] = |R| \cdot \frac{d_{E}}{m} $ and $\EX [{\NT}_R] = |R|\cdot  \frac{\NT}{m} $.
    \item With probability at least $1-\frac{\eps}{\log n}$, $d_R \leq \EX \left[ d_R \right] \cdot \frac{\log n}{\eps}$.
    \item With probability at least $1-\frac{1}{c\log n}$, 
    $|\nt_R -\EX[\nt_R]| \leq \eps \EX[\nt_R] $.
\end{enumerate}
\end{lemma}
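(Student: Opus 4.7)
The plan is to handle the three items in increasing order of difficulty, exploiting the fact that once the first $\mix$ steps are discarded, each edge $e_i \in R$ has marginal distribution (essentially) equal to the stationary distribution over edges, which is uniform: $\Pr[e_i = e] = 1/m$ for every $e \in E$. Given this, Item (1) is immediate from linearity of expectation: $\EX[d_{e_i}] = \frac{1}{m}\sum_{e} d_e = d_E/m$ and $\EX[\nt_{e_i}] = \frac{1}{m}\sum_{e} \nt_e = \NT/m$, and summing over the $|R|$ positions yields the claim (linearity requires no independence). For Item (2), $d_R$ is non-negative, so Markov's inequality directly gives $\Pr[d_R > \EX[d_R]\cdot\log n/\eps] \le \eps/\log n$.

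The real work is in Item (3), where I will apply Chebyshev's inequality. The key quantity to bound is $\Var[\nt_R]$. The obstacle here is that the $e_i$ are correlated along the walk, so I cannot just sum per-step variances. My plan is to exploit mixing by partitioning the positions $\{1,\ldots,\wL\}$ into $\mix$ arithmetic-progression subchains $R^{(1)},\ldots,R^{(\mix)}$, where $R^{(j)} = (e_j, e_{j+\mix}, e_{j+2\mix}, \ldots)$. Within any one subchain, consecutive samples are separated by $\mix$ walk steps, so by the mixing-time property the joint distribution on $R^{(j)}$ is within total-variation distance $n^{-\omega(1)}$ of a product of uniform edge-distributions (a standard hybrid/coupling argument). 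Hence, up to negligible terms, $\Var[\nt_{R^{(j)}}] \le |R^{(j)}| \cdot \EX_\pi[\nt_e^2]$, and since $\EX_\pi[\nt_e^2] = \frac{1}{m}\sum_{e} \nt_e^2 \le \nt_{\emax}\cdot \NT/m$, this gives $\Var[\nt_{R^{(j)}}] \le |R^{(j)}|\cdot \nt_{\emax}\cdot \NT/m$. Cauchy--Schwarz applied to standard deviations then yields $\Var[\nt_R] \le \mix \cdot \sum_j \Var[\nt_{R^{(j)}}] \le \mix \cdot |R| \cdot \nt_{\emax}\cdot \NT/m$, since $\sum_j |R^{(j)}| = |R|$.

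Plugging this variance bound into Chebyshev with $\EX[\nt_R]^2 = |R|^2 \NT^2 / m^2$ gives
\[
\Pr\bigl[|\nt_R - \EX[\nt_R]| \ge \eps\, \EX[\nt_R]\bigr] \;\le\; \frac{\mix \cdot \nt_{\emax} \cdot m}{\eps^2 \cdot |R| \cdot \NT},
\]
and substituting $|R| = \wL = \frac{\log n}{\eps^2}\cdot\frac{m \mix \nt_{\emax}}{\NT}$ collapses the right-hand side to $O(1/\log n)$, which becomes $\le 1/(c\log n)$ after absorbing constants into $\wL$. The main obstacle is the near-independence step inside each subchain; a clean alternative that avoids partitioning is to invoke the spectral-gap variance bound $\Var[\sum_t f(X_t)] \lesssim (1/\gamma)\cdot \wL \cdot \Var_\pi[f]$ for a reversible lazy chain with $1/\gamma = O(\mix)$, which gives the same estimate directly. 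Either route produces exactly the inequality needed for item (3).
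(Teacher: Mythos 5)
Your proposal is correct, and items (1) and (2) coincide with the paper's proof: linearity of expectation over the uniform edge-marginals $\Pr[e_i=e]=1/m$, followed by Markov's inequality. For item (3) you reach the same variance bound $\Var[\nt_R]\le \mix\cdot|R|\cdot\nt_{\emax}\cdot\NT/m$ by a genuinely different decomposition. The paper expands $\EX[\nt_R^2]$ directly and splits the index pairs into those with $|i-j|>\mix$, whose cross-terms factor and are absorbed into $(\EX[\nt_R])^2$, and the $O(\mix)$ nearby pairs per index, each bounded by $\nt_{\emax}\EX[Y_i^{\NT}]$. You instead block the walk into $\mix$ arithmetic-progression subchains, bound each subchain's variance by $|R^{(j)}|\cdot\EX_{\pi}[\nt_e^2]\le |R^{(j)}|\cdot\nt_{\emax}\NT/m$ using (near-)independence within a subchain, and recombine via Cauchy--Schwarz at a multiplicative cost of $\mix$. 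The two computations are interchangeable here and give identical final bounds; your blocking argument has the aesthetic advantage of reusing exactly the partition the paper already employs in \textsc{EdgeCountEstimator}, and your spectral-gap alternative would yield the same estimate with no partitioning at all. One caveat applies equally to both proofs: edges separated by exactly $\mix$ steps are only \emph{approximately} independent (the standard definition of mixing time guarantees total-variation distance $1/4$, not $n^{-\omega(1)}$ as you assert), so a fully rigorous treatment would separate samples by $\Theta(\mix\log n)$ steps or carry the total-variation error through the covariance bounds; since the paper itself asserts exact independence after $\mix$ steps, this does not place your argument below its standard of rigor.
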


\begin{proof} We first compute the expected value of $d_R$ and $\nt_R$. 
For each index $i\in [|R|]$ in the set $R$, we define two random variables $Y_{i}^{d}$ and $Y_{i}^{\NT}$:
$Y_{i}^{d} = d_{e_i}$, and $Y_{i}^{\NT} = \nt_{e_i}$, where 
$e_i$ is the $i$-th edge in $R$. 
Then, $d_R = \sum_{i=1}^{\wL} Y_{i}^{d}$ and $t_R = \sum_{i=1}^{\wL} Y_{i}^{\NT}$. We have
\begin{align*}
    \EX \left[Y_{i}^{d} \right] &= \sum_{e \in E} \Pr[e_i=e] \cdot \EX 
    \left[ Y_{i}^{d} | e_i = e\right] = \frac{1}{m} \sum_{e \in E} d_e = \frac{d_{E}}{m} \,.
\end{align*}
By linearity of expectation, $\EX [d_R] = |R| \cdot {d_{E}}/{m} $.
Analogously, using the fact that $\sum_{e\in E}\nt_e = \NT$, we get $\EX [\nt_R] = |R|\cdot  \NT /m $. 

We now turn our focus on the concentration of $d_R$. This is achieved by a simple application of Markov inequality.
\begin{align*}
    \Pr \left[ d_R \geq \EX \left[ d_R \right] \cdot \frac{\log n}{\eps} \right] \leq \frac{\eps}{\log n} \,.
\end{align*}
Hence, the second item in the lemma statement follows.

We now prove the third item. To prove a concentration bound on $\nt_R$, 
we fist bound the variance of $\nt_R$ and then apply Chebyshev
inequality (~\Cref{thm:conc}). Note that not all the edges in $R$
are independent --- however, the edges that
are at least $\mix$ many steps apart in the set $R$ are
independent.
We bound the variance as follows.
\begin{align*}
    &\Var[\nt_R] = \EE[\nt_R^2] - \left(\EE[\nt_R]\right)^2 \\
    &= \sum_{|i-j|> \mix} \EE[Y_{i}^{\NT}]\cdot \EE[Y_{j}^{\NT}] + \sum_{|i-j|\leq  \mix} \EE[Y_{i}^{\NT}\cdot Y_{j}^{\NT}] - \left(\EE[\nt_R]\right)^2  \\
    &\leq  \left(\EE[\nt_R]\right)^2 + \sum_{i\in [|R|]} \mix \nt_{\emax} \EE[Y_{i}^{\NT}]
    -\left(\EE[\nt_R]\right)^2 \\
    &\leq \mix \nt_{\emax} |R| \frac{\NT}{m}\,.
\end{align*}
By Chebyshev's inequality, we can upper bound
$\Pr [ \nt_R \notin (1\pm\eps)\EX[\nt_R]]$ by
\begin{align*}
    \frac{\Var[t_R]}{\eps^2 \EX[\nt_R]^2} &= \frac{1}{\eps^2}\cdot \frac{ \mix \cdot \nt_{\emax} \cdot |R| \cdot \NT }{m}
        \cdot \frac{m^2}{|R|^2 \NT^2} \\
    &= \frac{1}{|R|} \cdot \frac{m \cdot \mix \cdot \nt_{\emax}}{\eps^2 \NT} \leq  \frac{1}{c\log n} \,.
\end{align*}
The last inequality follows because $|R| = r = \frac{\log n }{\eps^2}\cdot \frac{m \mix  \nt_{\emax}}{\NT}$.
\end{proof}

Now, we complete the analysis.
\begin{proof}[Proof of~\Cref{thm:tri_count}]
\Cref{lem:good_set} implies, with probability at least 
$1-1/4\log n$, we have $\nt_R \geq (1-\eps) \cdot |R| \cdot \frac{\NT}{m}$
and $d_R \leq \frac{\log n}{\eps }\cdot |R| \cdot \frac{d_E}{m}$.
Hence, $\frac{\nt_R}{d_R} \geq (1-\eps) \cdot \frac{\eps}{\log n} \cdot \frac{\NT}{d_E}$.
This is the first property (~\cref{eqn:prop_1}) for $R$ to 
to be {\em good}. The second property (~\cref{eqn:prop_2}) is true by the item (3) of ~\Cref{lem:good_set}. Hence, $R$ is {\em good} with
probability at least $1-1/4\log n$. Hence, we remove the
condition on~\Cref{lem:tri}, and derive that with probability at least
$1-1/3\log n$,
$    X \in \left[ (1-4\eps) \NT, (1+4\eps)\NT  \right]$.
Re-scaling the parameter $\eps$ appropriately, the accuracy of 
\ALG follows. The number of queries is bounded by $O(\wL+\sL)$.
The space complexity and the running
time of \ALG are both bounded by $\tO(r+\ell)$.
\footnote{To count the number of the collisions in $R_i$ (~\cref{line:R_i} of~\Cref{alg:edge_estimate}),
we use a dictionary of size $O(|R_i|)$ to maintain the frequency of each element in $R_i$. Hence, the space and time complexity 
of~\Cref{alg:edge_estimate} is bounded by $\tO(|R|)$.}
Note that $\wL$ is $\widetilde{O}(m\mix\nt_{\emax}/NT + \sqrt{m}\mix)$ (\Cref{lem:good_set} and \Cref{thm:edge_estimator})
and
$\sL = c\log^2n \cdot d_E/\eps^3T$ (\Cref{lem:goodY}). \Cref{lem:chiba}
asserts that $d_E = O(m\alpha)$.

\end{proof}

%

\section{Experimental Evaluation}
\label{sec:exp}

In this section, we present extensive empirical evaluations
of \ALG. We implement all algorithms in C++ and 
run our experiments on a Cluster with 128 GB DDR4 DRAM memory capacity
and Intel Xeon E5-2650v4 processor running CentOS 7 operating system.
For evaluation, we use a collection of {\em very large graphs} taken from the Network Repository~\cite{graphrepository2013}. 
Our main focus is on massive graphs --- we mainly consider graphs with
more than 100 million edges in our collection.
\footnote{In presenting the number of edges, we consider the sum of degrees of all the vertices, which is twice the number of 
undirected edges.} 
The details of the graphs are given in~\Cref{table:graphs}. We make all graphs simple by removing
duplicate edges and self-loops.

\begin{table}[!ht]
  \caption{Description of our dataset with the key parameters, \#vertices($n$), \#edges($m$), \#triangles($\NT$), \#sum-edge-degrees($d_E$).
  }
  \label{table:graphs}
  \begin{tabular}{ccccc}
    \toprule
    Graph name  &$n$ & $m$ & $\NT$  & $d_E$\\
    \midrule
    soc-orkut & 3M &213M & 525M & 27B\\
    soc-sinaweibo & 59M & 523M & 213M & 41B\\
    soc-twitter-konect & 42M & 2.4B & 34.8B & 1325B\\
    soc-friendster & 66M & 3.6B & 4.2B & 737B \\
    \bottomrule
  \end{tabular}
\end{table}

\mypar{Key Findings}
Based on our experiments, we report four key findings.
\begin{inparaenum}[\bfseries (1)]
\item \ALG achieves high accuracy for all the datasets with 
minimal parameterization. This is remarkable considering that the
key structural properties of the various graphs are quite
different. In all cases, with less than $0.02m$ queries, \ALG
consistently has a median error of less than 2\% and a maximum
error (over 100 runs) of less than 5\%.
\item The variance of the estimation of \ALG is quite small,
and it converges as we increase the length of the
random walk $\wL$.
\item \ALG consistently outperforms other baseline algorithms
on most of the datasets.
In fact, \ALG 
exhibits remarkable accuracy while observing only a tiny
fraction of the graph --- some of the baseline algorithms
are far from converging at that point.
\item The choice of seed vertex does not affect the accuracy
of \ALG. It exhibits almost identical accuracy irrespective of whether it starts from 
a high degree vertex or a low degree vertex.
\end{inparaenum}

\subsection{Implementation Details}
\label{subsec:exp_implementation}
Our algorithm takes $3$ input parameters:
the length of the random walk $\wL$,
the number of sub-samples $\sL$, and an estimate $\widehat{\mix}$ of the mixing time.
In all experiments, we fix $\widehat{\mix}$ to be $25$,
and set $\sL = 0.05 \wL$. We vary $\wL$ to get runs 
for different sample sizes. 
%
For any particular setting of $\wL$, 
$\sL$, $\widehat{\mix}$, and the seed vertex $\seed$, we repeat \ALG 100 times to determine its accuracy. We measure accuracy in terms of
relative error percentage: $|\NT - \textsc{estimate}|\times 100/\NT$.

For comparison against baselines, we study the 
query complexity of each algorithm.  Since we wish to understand how much of a graph needs to be seen,
we explicitly count two types of queries: {\em random neighbor query} and {\em edge query}.
We stress that all algorithms (\ALG and other baselines) query the degree of every vertex that is seen,
typically for various normalizations. Therefore, degree queries are not useful for distinguishing different
algorithms. We note that the number of queries of \ALG
made is $\wL + 2\sL$.
(For every subsampled edge, \ALG makes a random neighbor
and edge query.) 
In all our results, we present in terms of the number of queries made.
We measure the total number of queries made
as a fraction of sum the degrees: $(\textsc{\#queries} *100 / 2m )\% $. 

\subsection{Evaluation of \ALG}
\label{subsec:eval}

We evaluate \ALG on three parameters: accuracy, convergence, and robustness against the choice of 
initial seed vertex. To demonstrate convergence and robustness to the seed vertex, in the main paper, we choose a subset of the datasets. Results are consistent across all datasets.

\mypar{Accuracy} 
\ALG is remarkably accurate across all the graphs,
even when it queries 2\%-3\% of the entire graph.
In~\Cref{fig:intro-fig}, we plot the median and the max relative error 
of \ALG over 100 runs for each dataset. In all these runs, we set $\sL < 0.03m$,
and the total number of queries is at most $0.03m$.
\ALG has a maximum relative error of only 5\%, and 
the median relative error lies below 2\%. Remarkably, for all the datasets
we use the same parameter settings. 

We present further evidence of the excellent performance of 
\ALG. In~\Cref{fig:apx:variance},
we plot the median relative error percentage of \ALG across the four datasets for a fixed setting of parameters --- we restrict \ALG
to visit only 3\% of the edges. We show the variance of the error in the estimation.
Observe that the variance of the estimation is remarkably small.

Finally, in~\Cref{fig:apx:MRE}, we plot
the median relative error percentage of \ALG while
increasing the length of the random walk, $\wL$. 
The behavior of \ALG is stable 
across all datasets. Observe that, for larger datasets, \ALG 
achieves about almost 2\% accuracy even when it sees only 2\% of the edges.

\begin{figure*}[!ht]
  \centering
  \begin{subfigure}[t]{\columnwidth}
  \centering
  \includegraphics[width=0.66\columnwidth]{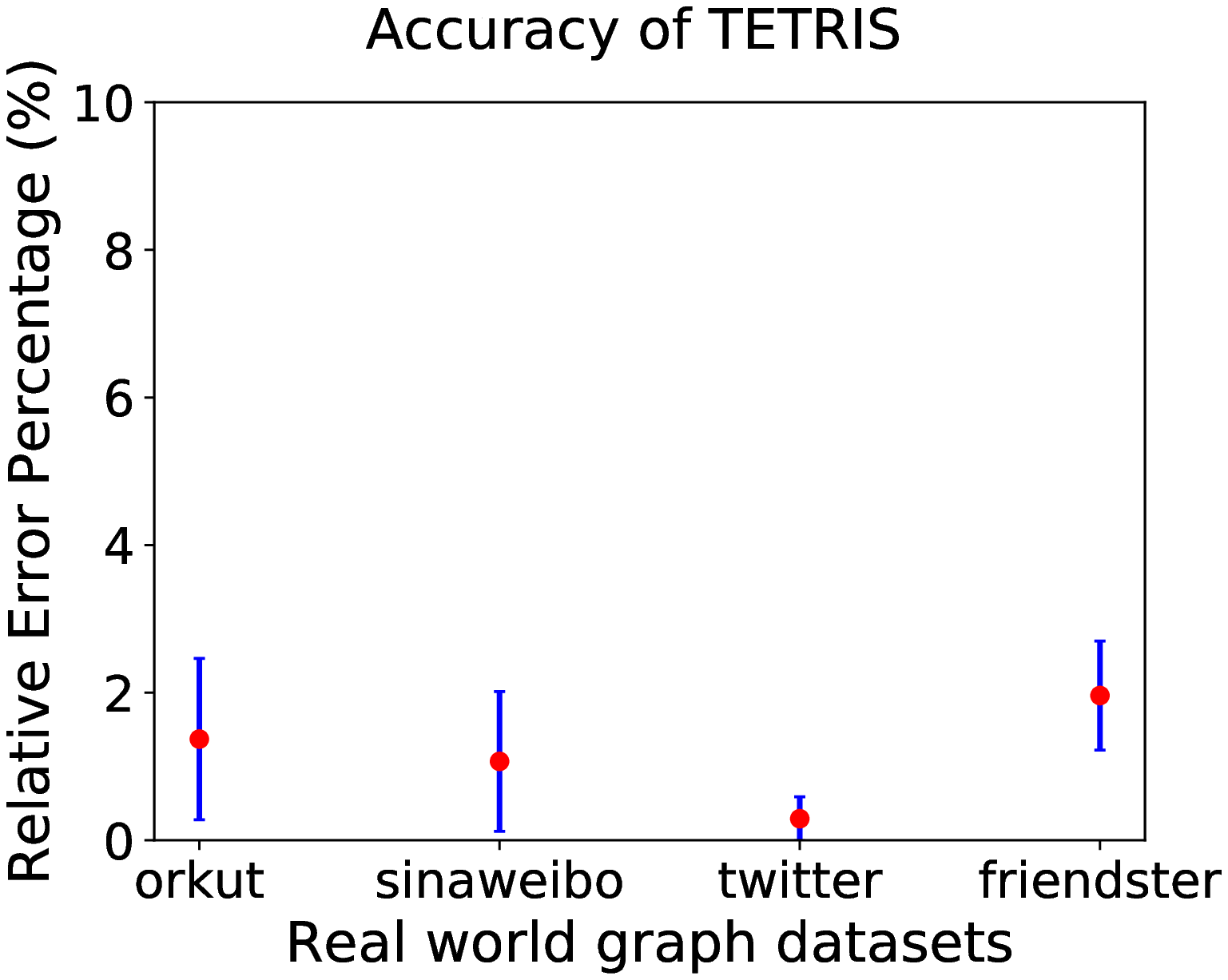} 
  \caption{We plot median relative error
  estimates for \ALG on various datasets for 100 runs with fixed set of parameters. We also show the variance in the error percentage. We restrict \ALG to visit at most 3\% of the edges.}
  \label{fig:apx:variance}
  \end{subfigure} \hfill
  \begin{subfigure}[t]{\columnwidth}
  \centering
  \includegraphics[width=0.66\columnwidth]{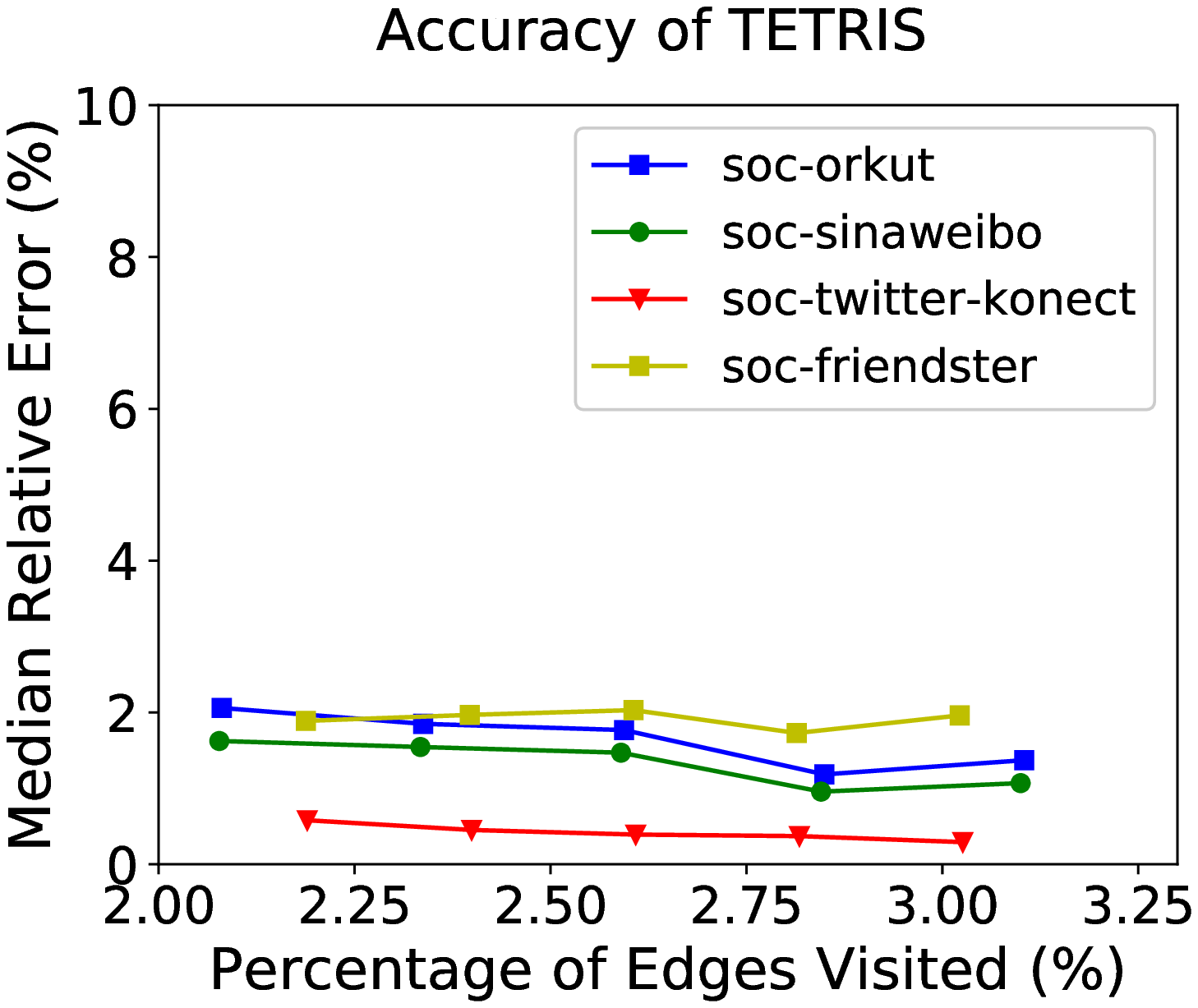}
  \caption{we show the effect of varying the random walk length, $\wL$. On y-axis, we have median relative error percentage, and on x-axis we have percentage of
  the edges visited.}
  \label{fig:apx:MRE}
  \end{subfigure}
  \caption{Accuracy of \ALG. 
  }
  \label{fig:apx:accuracy}
\end{figure*}

\mypar{Convergence}
The convergence for the orkut, weibo and twitter datasets are demonstrated in
~\Cref{fig:converge}. 
We plot the final triangle estimation of \ALG for 100 iterations for fixed choices of $\seed$, $\wL$, $\sL$, and $\mix$ (see~\cref{subsec:exp_implementation} for details). 
We increase $\wL$ gradually and show that the mean estimation tightly concentrates around the true triangle count. 
Observe that the spread of our estimator is less
than 5\% around the true triangle count even when we explore just 3\% of the 
graph. 
%
\begin{figure*}[!ht]
  \centering
  \includegraphics[width=0.66\columnwidth]{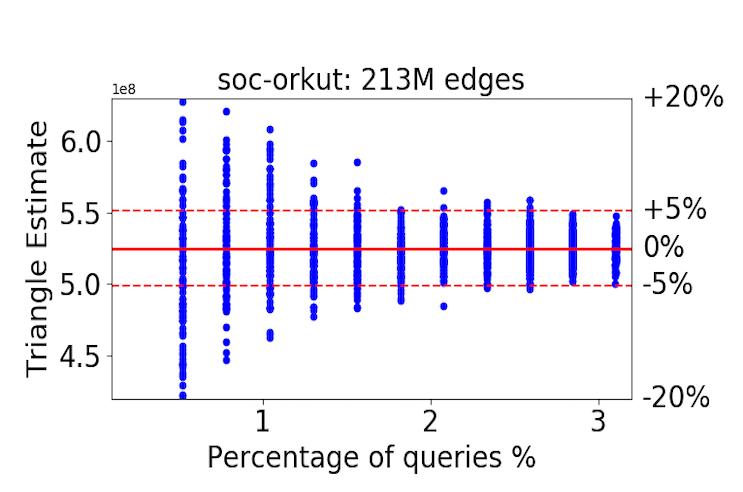}
    \includegraphics[width=0.66\columnwidth]{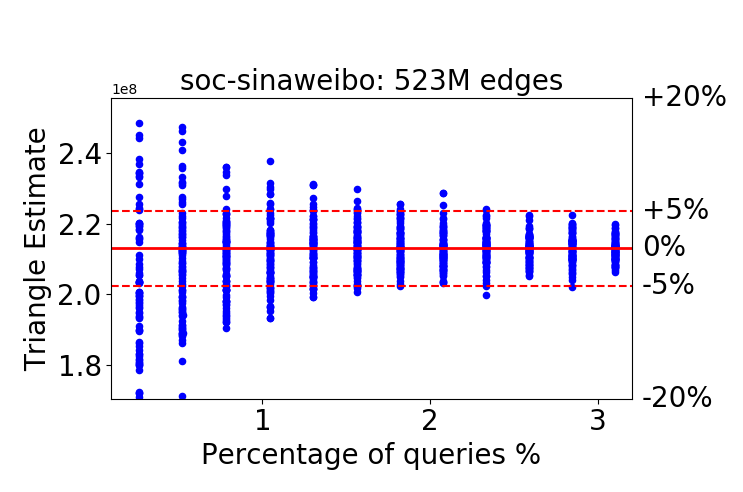}
    \includegraphics[width=0.66\columnwidth]{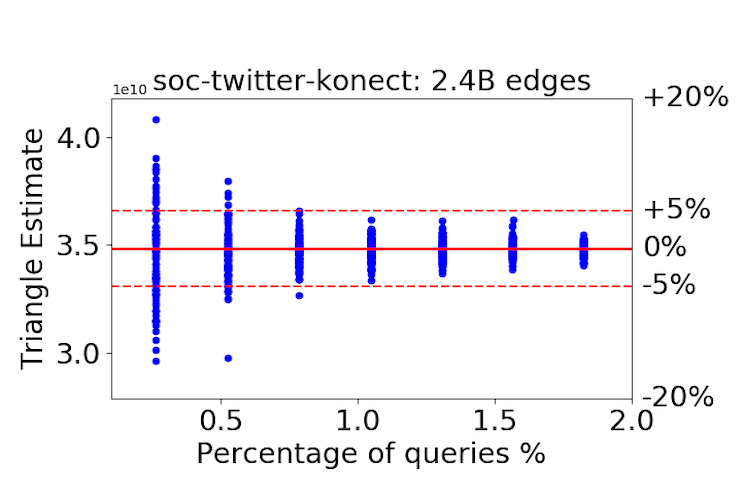}
  \caption{Convergence of \ALG. We plot on the y-axis, the final output of \ALG for each of the 100 runs
  corresponding to fixed value of $\wL$. On x-axis, we show the percentage of the queries
  made by \ALG during its execution by increasing $\wL$. The maximum observed edge percentage corresponding to largest setting of $\wL$ is 3\%.}
  \label{fig:converge}
\end{figure*}

\mypar{Robustness against the choice of seed vertex}
In the previous set of experiments, we selected a
fixed uniform random vertex as $\seed$.
Surprisingly, we show that \ALG performs almost identically regardless of the choice of seed vertex.
We partition the vertices into multiple 
buckets according to their degree: the $i$-th group contains
vertices with degree between $10^{i}$ and $10^{i+1}$. Then, from
each bucket, we select $4$ vertices uniformly at random, and
repeat \ALG $100$ times for a fixed choice of $\wL$. 

In ~\Cref{fig:robustness}, we plot the results for the orkut and the twitter datasets. On the x-axis, we consider
$4$ vertices from each degree based vertex bucket for a total of $16$
vertices. On the y-axis, we plot
the relative median error percentage of \ALG for 100 independent runs 
starting with the corresponding vertex on the x-axis as the seed vertex 
$\seed$. The choice of $\wL$ leads to observing 3\% of the graph.
As we observe, the errors are consistently small irrespective of whether \ALG starts its random walk from a low degree vertex or a high degree vertex. The same behavior persists across datasets for varying $\wL$. 


\begin{figure*}[!ht]
  \centering
  \begin{subfigure}[t]{\columnwidth}
  \centering
     \includegraphics[width=0.9\columnwidth]{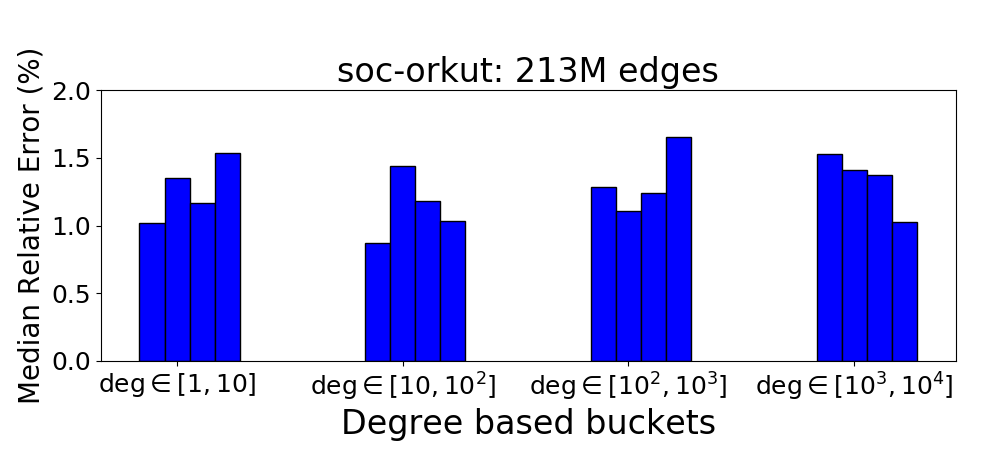}
  \end{subfigure}
   \begin{subfigure}[t]{\columnwidth}
   \centering
     \includegraphics[width=0.9\columnwidth]{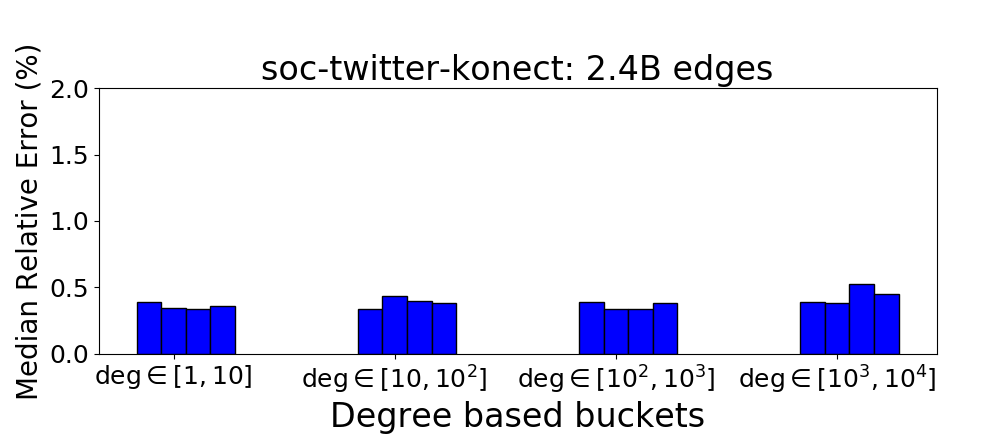} 
     \end{subfigure}
  \caption{Robustness of \ALG. We select 4 vertices uniformly randomly from each
  degree based vertex bucket. On y-axis, we show the median relative error for 100
 runs of \ALG with corresponding vertex as the seed vertex. The  parameters $\wL$
 and $\sL$ are fixed, and results in 3\% of the edges being visited.}
  \label{fig:robustness}
\end{figure*}

\subsection{Comparison against Previous Works}
\label{subsec:comapre_baseline}

\begin{figure*}[!ht]
  \centering
  \includegraphics[width=0.95\textwidth]{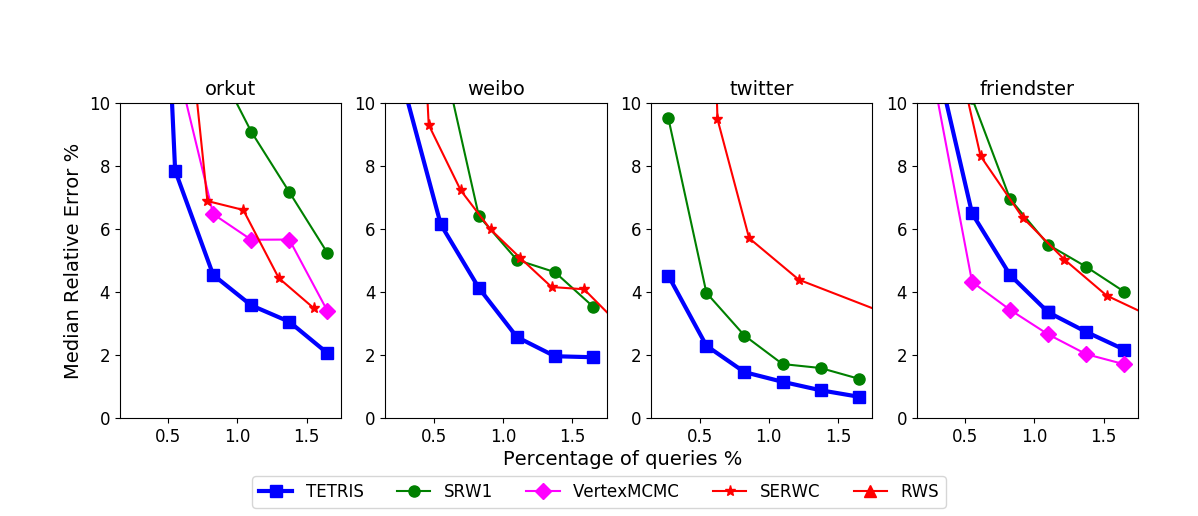}
  \caption{Comparison against baseline. 
  For each dataset and for each parameter setting, we run all the algorithms 100 times
 using the same randomly chosen seed vertex.
 We compare the median relative error in estimation vs 
the percentage of queries made. The median error of RWS does not drop below 10\% for any of the datasets. }
  \label{fig:baseline_comapre}
\end{figure*}

We compare \ALG against the following four benchmark algorithms. 
The first two algorithms are state-of-the-art in the {\em random walk access} model.
The other two algorithms are simulations of widely popular {\em uniform random edge} sampling based algorithms 
in our model (recall that uniform random edge samples are unavailable in our model). 
\begin{enumerate}
    \item \textbf{VertexMCMC (\cite{rahman2014sampling})}: Rahman~\etal in ~\cite{rahman2014sampling} proposed 
    multiple Metropolis-Hastings algorithms, and VertexMCMC is the best amongst them for counting triangles.
    In this procedure, at each step of the random walk, two 
    neighbors of the current vertex are sampled uniformly at random and tested for the existence of a triangle. A final count is derived by scaling by the empirical success probability of finding a triangle. 
    \item \textbf{Subgraph Random Walk (SRW~\cite{chen2016general})}: Chen~\etal in ~\cite{chen2016general}
    proposed SRW1CSS in which every three consecutive vertices
    on the random walk is tested for the existence of a triangle. 
    \footnote{Chen~\etal~\cite{chen2016general} also consider a non-backtracking variation of the random walk in 
    designing various algorithms. However, we do not observe any positive impact on the accuracy due to this variation and
    hence we do not incorporate this.}
    \item \textbf{Random Walk Sparsifier (RWS)}: 
    This algorithm is based on the graph sparsification strategy of ~\cite{tsourakakis2009doulion}.
    It performs a random walk and counts the number of triangles in 
    the multi-graph induced by the edges collected during 
    the walk (with appropriate normalization).
    \item \textbf{Sample Edge by Random Walk and Count (SERWC)}: This algorithm
    is similar in spirit to that of~\cite{wu2016counting}. It counts the 
    number of triangles incident on each
    edge of the random walk and outputs a scaled average as the final estimate.
    We note that SERWC relies on counting the number of triangles
incident on an edge --- a non-trivial task in our model.
To make a fair comparison, we allow it
to make neighbor queries: given $v\in V$ and an integer $i$,
return the $i$-th neighbor of $v$, if available, else return $\emptyset$.
These queries are counted towards final accounting.
\end{enumerate}

We plot our comparison in~\Cref{fig:baseline_comapre} for each of the four datasets. \ALG is consistently accurate over all the datasets. VertexMCMC has better accuracy on the friendster dataset,
however on weibo and twitter it has more than 10\% error even with 1.5\% queries. In contrast,
\ALG converges to an error of less than 2\% with the same amount of queries.
We also observe that RWS
does not converge at all, and its error is more than 10\%. (Essentially, the edges collected by the random walk are too correlated
for standard estimators to work.) We note that SRW is also consistent across all datasets, though
\ALG significantly outperforms it on all datasets.
%

\mypar{Normalization Factor}
Other than VertexMCMC, all the remaining algorithms require
an estimate for the number of edges in the graph. VertexMCMC requires the
wedge count $\sum_{v\in V}\binom{d_v}{2}$.
While such estimates are readily available in various models, the random walk
access model does not reveal this information easily. We use the 
\EdgeALG algorithm with collected edge samples to estimate $m$ for each algorithm that 
requires an estimate for $m$. 
To estimate wedge count, we build a simple unbiased estimator (recall that
the degree of each of the vertices explored by VertexMCMC is available for free).

\begin{acks}

The authors would like to thank the anonymous reviewers for
their valuable feedback. The authors are supported by NSF TRIPODS grant CCF-1740850, NSF CCF-1813165, CCF-1909790, and ARO Award W911NF1910294.

\end{acks}
\bibliographystyle{ACM-Reference-Format}
\bibliography{refs.bib}


\appendix
\section{Proof of Theorem~\ref{thm:edge_estimator}}
\label{appendix:proof}

The proof of the~\Cref{thm:edge_estimator} follows directly from the Theorem 3.1 of~\cite{RT16}. For the sake of completeness, we include the proof here.
\begin{proof}[Proof of~\Cref{thm:edge_estimator}]
We begin with the estimation of $\EX[c_i]$ and $\Var[c_i]$.
Assume for each $1\leq j < k \leq |R_i|$, $c_{j,k}$ be the 
indicator random variable that is set to $1$ if the $j$-th and $k$-th
element in $R_i$ are same. Then,
\begin{align*}
    \EX[c_i] = \sum_{(j,k): i \neq j} \EX[c_{j,k}] 
    = \frac{1}{m} \cdot \binom{|R_i|}{2} \,.
\end{align*}
Now we turn to variance estimation.
Note that each edge in the set $R_i$ is $\widehat{\mix} \geq \mix$ many
steps apart in the set $R$, and hence are independent.
We have, $\Var[c_i] = \EX[c_i^2] - (\EX[c_i])^2$. Now, 
$\EX[c_i^2] = \sum_{i<j,k<l} \EX[c_{ij}c_{kl}]$.
Expanding the summations, we get three types of random variables: (1) $c_{ij}^2$ (2) $c_{ij}\cdot c_{kl}$ with exactly three 
distinct indices among $i,j,k,l$, and (3) $c_{ij}\cdot c_{kl}$ with all
four distinct indices $i,j,k,l$. For the first type, $\EX[c_{ij}^2] = 1/m$. For the remaining two type, $\EX[c_{ij}\cdot c_{kl}] = 1/m^2$.
Plugging in these terms, we get
\begin{align*}
    \Var[c_i] &= O\left( \frac{|R_i|^2}{m} + \frac{|R_i|^3}{m^2} \right)\,.
\end{align*}
Then, we apply the Chebyshev Inequality in~\Cref{thm:conc} and get
\begin{align*}
    \Pr [c_i \notin (1\pm \eps ) \EX[c_i]] \leq O\left( \frac{1}{\eps^2} \left(\frac{m}{|R_i|^2} + \frac{1}{|R_i|}\right)\right) \,.
\end{align*}
Plugging the the value of $|R_i|$, we get $Y_i\in (1\pm \eps)m$ with 
constant probability. We boost the success probability with 
repetitions.
\end{proof}

\end{document}